\documentclass{article}
\usepackage{amsfonts}
\usepackage{mathrsfs}
\usepackage{amsthm}
\usepackage{amssymb}
\usepackage{amsmath}
\usepackage{enumerate}
\usepackage{braket}
\usepackage{booktabs}

\theoremstyle{plain}
\newtheorem{theorem}{Theorem}[section]
\newtheorem{proposition}[theorem]{Proposition}
\newtheorem{corollary}[theorem]{Corollary}
\newtheorem{lemma}[theorem]{Lemma}
\newtheorem{conjecture}{Conjecture}[section]
\theoremstyle{definition}
\newtheorem{definition}{Definition}[section]

\newtheoremstyle{myremark}
  {3pt}
  {3pt}
  {\itshape}
  {}
  {\itshape}
  {.}
  {.5em}
  {}

\theoremstyle{myremark}
\newtheorem{remark}{Remark}[section]

\theoremstyle{definition}

\usepackage{chngcntr}
\counterwithout{table}{section}

\numberwithin{equation}{section}

\title{Spectral Criterion for Disorder-Free Localization of Quantum Walks on Hypercube: QBN Approach}

\author{Ce Wang\thanks{email: cewangwhu@163.com; orcid:0000-0002-2916-8478} \\
Shanghai Institute for Mathematics and Interdisciplinary Sciences\\
Shanghai 200433, People's Republic of China}
\date{}
\begin{document}
\maketitle

\begin{abstract}
We study disorder-free localization in a spin-coupled quantum walk on the hypercube within the quantum Bernoulli noises framework. We derive a general criterion for disorder-free localization: it occurs if and only if some spectral subspace of a reduced unitary evolution contains a vector whose position marginal probability is non-uniform. For the Grover walk with distance-dependent coupling \(\phi_\sigma=|\sigma|\pi/(n+1)\) in dimension \(d=4\), we explicitly construct a non-uniform fixed point in the all \(+1\) spin configuration. This implies that disorder-free localization occurs in our model. 
\end{abstract}

\vskip 2mm

\noindent\textbf{Keywords.}\ \ Disorder-free Localization; Spin-coupled Quantum walk; Quantum Bernoulli noises; Stationary distribution
\vskip 2mm

\noindent\textbf{Mathematics Subject Classification.}\ \ 81S25

\section{Introduction}

Quantum walks, the quantum mechanical counterpart of classical random walks, have emerged as a powerful framework for quantum simulation and quantum computation over the past three decades. Since their introduction by \cite{Aharonov1993}, quantum walks have been shown to constitute a universal model of quantum computation \cite{VenegasAndraca2012} and provide exponential speedups in algorithmic applications \cite{Childs2003}. Among the various graph topologies studied, the hypercube has attracted particular attention due to its high symmetry, rich spectral properties, and relevance to quantum search algorithms and fault-tolerant quantum computing \cite{MooreRussell, Kempe2005, AlagicRussell}.

Anderson localization \cite{Anderson1958} have traditionally been understood as disorder-driven phenomena\cite{AizenmanMolchanov1993, FrohlichSpencer}. This raises the question whether disorder is truly necessary for localization. Smith et al. \cite{Smith2017} introduced the concept of disorder-free localization (DFL), showing that a translationally invariant system can dynamically generate effective disorder through an extensive set of conserved quantities. Subsequent work revealed diverse DFL mechanisms, including gauge invariance induced localization \cite{Brenes2018}, hidden Anderson localization in disorder-free Ising-Kondo lattices \cite{Yang2020}, disorder-free localization transitions in two-dimensional lattice gauge theories \cite{Chakraborty2022}, discrete-time quantum walks with onsite spin conserved quantities \cite{Danaci2021}, continuous-time quantum walks with permutation symmetry \cite{Balachandran2024}, and static potential engineering in hypercube networks \cite{Arkhipov2025}. These works collectively indicate that disorder is not a necessary condition for localization, and that DFL can emerge in a wide range of physical systems through distinct mechanisms.

In this work, we study a spin-coupled quantum walk on the hypercube within the quantum Bernoulli noises (QBN) framework introduced by Wang \cite{WCL-JMP-2010}. The walker interacts with a local spin environment through a position-dependent phase coupling, which preserves unitarity while making the effective coin operators position-dependent. Our main contribution is as follows.
\begin{itemize}
    \item  We derive a general criterion for disorder-free localization in our model. The total Hilbert space decomposes into invariant sectors labeled by spin configurations $\mathbf s$, and in each sector the evolution reduces to a unitary operator $\tilde W_{\mathbf s}$. By analyzing the time-averaged position probability distribution, we show that DFL occurs if and only if there exist a spin configuration $\mathbf s$ and an eigenvalue $\lambda$ of $\tilde W_{\mathbf s}$ such that the corresponding eigenspace contains a vector whose position marginal probability is not uniform. This criterion reduces the problem of DFL to the study of fixed-point equations when $\lambda=1$, or more generally to spectral subspaces of $\tilde W_{\mathbf s}$.
    \item We apply this criterion to the Grover walk with distance-dependent coupling
\[
\phi_\sigma = \frac{|\sigma|}{n+1}\pi,
\]
where $|\sigma|$ denotes the Hamming weight of the vertex $\sigma$ in the hypercube $\Gamma_n$. For dimension $d=4$, we explicitly construct a non-uniform fixed point in a fixed spin sector, thereby proving that DFL occurs. It shows that even in a highly symmetric graph such as the hypercube, coupling to a local spin environment with carefully chosen position-dependent phases can localize the walker, despite the absence of disorder.
\end{itemize}

Our results complement existing DFL mechanisms. In the present model, DFL arises from a purely phase-driven mechanism within a fixed spin sector: the spin environment generates an effective position-dependent phase, which breaks translational symmetry in that sector and thereby enables localization on the hypercube. This is consistent with the general picture that some form of symmetry breaking is needed to turn conserved quantities into effective disorder. However, our rigorous proof of DFL is at present restricted to the explicit \(d=4\) construction. We also observed the same consequence for \(d=6\) and \(d=8\) through numerical computations. This suggests a possible generalization to all even dimensions, but a complete proof remains open.

The paper is organized as follows. In Sec.~2 we recall the QBN framework. Section 3 we introduce the spin-coupled hypercube walk and derive the reduced dynamics and path-integral representation for a fixed spin configuration. In Sec.~4 we establish the general DFL criterion and apply it to the Grover walk with distance-dependent coupling, constructing a non-uniform fixed point for $d=4$. We conclude in Sec.~5 with a discussion.

\section{Preliminaries}

We first recall some basic preliminaries, for more details, see \cite{WCL-JMP-2010} and \cite{Wang2022}. Let $\Omega = \{-1,1\}^{\mathbb{N}}$ be the Bernoulli space equipped with the canonical product measure $\mathbb{P}$. The space of square-integrable Bernoulli functionals is denoted by $\mathfrak{h} = L^2(\Omega, \mathcal{F}, \mathbb{P})$. Let $\Gamma$ be the finite power set of $\mathbb{N}$. It is known that $\mathfrak{h}$ has an orthonormal basis (ONB)
$\{Z_{\sigma}\mid \sigma\in \Gamma\}$ of the form: $Z_{\emptyset}=1$ and
\begin{equation}\label{h_ONB}
    Z_{\sigma} = \prod_{j\in \sigma}Z_j,\quad \text{$\sigma \in \Gamma$, $\sigma \neq \emptyset$},
\end{equation}
where $Z=(Z_n)_{n\geq 0}$ is some sequence of Bernoulli functionals. For each $k \in \mathbb{N}$, the annihilation operator $\partial_k$ and the creation operator $\partial_k^*$ act on the ONB $\{Z_\sigma : \sigma \subset \mathbb{N},\ |\sigma| < \infty\}$ as
\begin{equation}
\partial_k Z_\sigma = \mathbf{1}_\sigma(k) Z_{\sigma\setminus k}, \qquad 
\partial_k^* Z_\sigma = (1 - \mathbf{1}_\sigma(k)) Z_{\sigma\cup k}.
\end{equation}
These operators satisfy the canonical anti-commutation relations (CAR)
\begin{equation}
\partial_k^2 = (\partial_k^*)^2 = 0, \qquad 
\partial_k\partial_k^* + \partial_k^*\partial_k = I,
\end{equation}
and $[\partial_j, \partial_k] = 0$ for all $j \neq k$.
In the literature, the operator family $\{\partial_k,\, \partial_k^*\mid k\in \mathbb{N}\}$
is known as quantum Bernoulli noises (QBN)

We denote by $\mathfrak{h}_n$ the subspace of $\mathfrak{h}$ spanned by $\{Z_{\sigma} \mid \sigma \in \Gamma_n\}$, namely
\[
  \mathfrak{h}_n = \mathrm{span}\{Z_{\sigma} \mid \sigma \in \Gamma_n\},
\]
where $\Gamma_n$ is the power set of $\mathbb{N}_n=\{0, 1, \cdots, n\}$. Note that $\Gamma_n\subset \Gamma$ and $\#(\Gamma_n)=2^{n+1}$, which implies that $\mathfrak{h}_n$ is a $2^{n+1}$-dimensional closed subspace of $\mathfrak{h}$ and for all $k\in \mathbb{N}_n$, $\partial_k$ and $\partial_k^*$
leave $\mathfrak{h}_n$ invariant.

It is known that the $n$-dimensional hypercube $(V^{(n)}, E^{(n)})$ is a regular graph and its degree is exactly $n$. We give an alternative description of a hypercube. For a nonnegative integer $n$, let $\mathbb{N}_n = \{0,1,\dots,n\}$. Consider $\Gamma_n$, two elements $\sigma$ and $\tau$ in $\Gamma_n$ are said to be adjacent if $\#(\sigma\triangle \tau)=1$. In that case, we writes $\sigma\sim \tau$. We denote by $(\Gamma_n, \mathfrak{E}_n)$ the graph with $\Gamma_n$ being the vertex set and $\mathfrak{E}_n$ being the edge set. $\mathfrak{E}_n$ is given by
\[
 \mathfrak{E}_n=\big\{\, \{\sigma, \tau\} \mid \sigma, \tau\in \Gamma_n,\, \#(\sigma\bigtriangleup \tau)=1\,\big\},
\]
where $\sigma\bigtriangleup \tau = (\sigma\setminus \tau) \cup (\tau\setminus\sigma)$.

\begin{lemma}
Let $n\geq 0$ be a nonnegative integer. Then the graph $(\Gamma_n, \mathfrak{E}_n)$ is isomorphic to the $(n+1)$-dimensional hypercube $(V^{(n+1)}, E^{(n+1)})$.
\end{lemma}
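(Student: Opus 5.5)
The plan is to write down the isomorphism explicitly using indicator vectors. I will take the standard model of the $(n+1)$-dimensional hypercube: $V^{(n+1)}=\{0,1\}^{n+1}$, with the coordinates indexed by $\mathbb{N}_n=\{0,1,\dots,n\}$, and two vertices joined by an edge exactly when they differ in one coordinate. I then define
\[
\Phi:\Gamma_n\to V^{(n+1)},\qquad \Phi(\sigma)=\bigl(\mathbf{1}_\sigma(0),\mathbf{1}_\sigma(1),\dots,\mathbf{1}_\sigma(n)\bigr).
\]

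First I show that $\Phi$ is a bijection. Its inverse sends a vector $x\in\{0,1\}^{n+1}$ to the set $\{k\in\mathbb{N}_n \mid x_k=1\}$, and a direct check shows the two maps compose to the identity in both orders. As a consistency check, both sets have $2^{n+1}$ elements.

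Next I show that $\Phi$ preserves adjacency in both directions. For $\sigma,\tau\in\Gamma_n$, the coordinates where $\Phi(\sigma)$ and $\Phi(\tau)$ differ are exactly the $k$ with $\mathbf{1}_\sigma(k)\neq\mathbf{1}_\tau(k)$, and these are precisely the elements of $\sigma\bigtriangleup\tau$. So the Hamming distance between $\Phi(\sigma)$ and $\Phi(\tau)$ equals $\#(\sigma\bigtriangleup\tau)$. It follows that $\{\sigma,\tau\}\in\mathfrak{E}_n$ if and only if $\#(\sigma\bigtriangleup\tau)=1$, if and only if $\{\Phi(\sigma),\Phi(\tau)\}\in E^{(n+1)}$. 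Hence $\Phi$ is a graph isomorphism.

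There is no real obstacle here. The only point needing care is to fix which model of the hypercube $(V^{(n+1)},E^{(n+1)})$ is meant, since the paper does not define it. If the intended model is a different one, for example $\{-1,1\}^{n+1}$ or vertices of the unit cube in $\mathbb{R}^{n+1}$ joined by edges of length one, I would compose $\Phi$ with the obvious coordinatewise isomorphism, such as $x_k\mapsto 1-2x_k$ onto $\{-1,1\}^{n+1}$.
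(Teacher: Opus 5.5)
Your proof is correct and complete: the indicator-vector map is a bijection $\Gamma_n\to\{0,1\}^{n+1}$, and since the Hamming distance between $\Phi(\sigma)$ and $\Phi(\tau)$ equals $\#(\sigma\bigtriangleup\tau)$, it preserves adjacency in both directions. The paper gives no proof of this lemma; it is quoted in the preliminaries from the cited QBN literature. Your identification $\sigma\mapsto(\mathbf{1}_\sigma(k))_{k\in\mathbb{N}_n}$ is the standard argument behind it, and your remark about fixing the model of $(V^{(n+1)},E^{(n+1)})$ appropriately handles the fact that the paper leaves that model undefined.
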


\begin{lemma}
Consider the graph $(\Gamma_n,\mathfrak{E}_n)$. Let $\sigma$, $\tau\in \Gamma_n$ be its vertices.
Then $\sigma\sim \tau$ if and only if there exists a unique $k\in \mathbb{N}_n$ such that
\begin{equation}\label{eq}
  (\partial_k^*+\partial_k)Z_{\sigma} = Z_{\tau}.
\end{equation}
\end{lemma}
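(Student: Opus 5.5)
The plan is to compute the action of $\partial_k^*+\partial_k$ on a basis vector $Z_\sigma$ explicitly and then read off both directions of the equivalence.

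\textbf{The key identity.} From the defining relations, for every $k\in\mathbb{N}_n$,
\[
(\partial_k^*+\partial_k)Z_\sigma=(1-\mathbf{1}_\sigma(k))Z_{\sigma\cup k}+\mathbf{1}_\sigma(k)Z_{\sigma\setminus k}.
\]
Exactly one of the two terms survives. If $k\notin\sigma$, the right-hand side is $Z_{\sigma\cup k}$; if $k\in\sigma$, it is $Z_{\sigma\setminus k}$. In both cases
\[
(\partial_k^*+\partial_k)Z_\sigma=Z_{\sigma\bigtriangleup\{k\}}.
\]
Everything else follows from this formula together with the orthonormality of $\{Z_\sigma\}$, which makes $Z_\alpha=Z_\beta$ equivalent to $\alpha=\beta$.

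\textbf{Forward direction.} Suppose $\sigma\sim\tau$. Then $\sigma\bigtriangleup\tau=\{k\}$ for a single element $k\in\mathbb{N}_n$, because both sets lie in $\Gamma_n$. This gives $\tau=\sigma\bigtriangleup\{k\}$, so the identity yields $(\partial_k^*+\partial_k)Z_\sigma=Z_\tau$.

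For uniqueness, suppose $j$ also satisfies $(\partial_j^*+\partial_j)Z_\sigma=Z_\tau$. Then $Z_{\sigma\bigtriangleup\{j\}}=Z_\tau$, so $\sigma\bigtriangleup\{j\}=\tau$. Hence $\{j\}=\sigma\bigtriangleup\tau=\{k\}$, which forces $j=k$.

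\textbf{Converse.} Suppose some $k\in\mathbb{N}_n$ satisfies the displayed equation. Then $Z_{\sigma\bigtriangleup\{k\}}=Z_\tau$, so $\tau=\sigma\bigtriangleup\{k\}$. This gives $\sigma\bigtriangleup\tau=\{k\}$, hence $\#(\sigma\bigtriangleup\tau)=1$ and $\sigma\sim\tau$.

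There is no real obstacle. The only point needing care is that distinct subsets give distinct (indeed orthogonal) basis vectors; this is what makes the equation between functionals equivalent to an equation between index sets, and it is exactly what secures the uniqueness of $k$.
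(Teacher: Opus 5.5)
Your proof is correct and follows the same route as the paper. The paper states the lemma without a formal proof, but right after it computes $(\partial_k^*+\partial_k)Z_\sigma=Z_{\sigma\triangle\{k\}}$ by splitting into the cases $k\in\sigma$ and $k\notin\sigma$, and reads off adjacency from that. You also make explicit two points the paper leaves tacit: the uniqueness of $k$, obtained from orthonormality of $\{Z_\sigma\}$ (so that $Z_\alpha=Z_\beta$ forces $\alpha=\beta$), and the converse direction.
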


Let $\sigma$ be a vertex in the graph $(\Gamma_n,\mathfrak{E}_n)$ and $k\in \mathbb{N}_n$. Then, using properties of $\partial_k$ and $\partial_k^*$, we have
\begin{equation*}
  (\partial_k^*+\partial_k)Z_{\sigma}
= \left\{
    \begin{array}{ll}
      Z_{\sigma\setminus k}, & \hbox{$k\in \sigma$;}\\
      Z_{\sigma\cup k}, & \hbox{$k\notin \sigma$.}
    \end{array}
  \right.
\end{equation*}
On the hand, $\sigma\sim (\sigma\setminus k)$ when $k\in \sigma$; $\sigma\sim (\sigma\cup k)$ when $k\notin \sigma$.
Therefore, the operator $(\partial_k^*+\partial_k)$ on $\mathfrak{h}_n$ behaves actually as a shift operator. Thus we define 
\begin{equation}\label{shift_operator}
  \Xi_k:=\partial_k+\partial_k^*,\quad k\in \mathbb{N}_n
\end{equation} 
as the shift operator of the quantum walk on the hypercube. For $k\in \mathbb{N}_n$, $\Xi_k$ is a self-adjoint unitary operator, the operator sequence $\{\Xi_k\mid k\in\mathbb{N}_n\}$ is commuting.

A coin operator system on a separable complex Hilbert space $\mathcal{X}$ of dimension $d_{\mathcal{X}} \ge n+1$ is a set $\{C_k : 0 \le k \le n\}$ of bounded operators satisfying the orthogonality and completeness conditions
\begin{equation}
C_j^* C_k = 0 \quad (j \neq k), \qquad 
\sum_{k=0}^n C_k^* C_k = I_{\mathcal{X}}.
\end{equation}

\begin{lemma}\label{coin-characterization}\cite{Wang2022}
Let\ $\mathfrak{C}=\{ C_k \mid 0\leq k\leq n\}$ be a system of bounded operators on $\mathcal{X}$. Then the following statements are equivalent:
\begin{enumerate}
  \item[(1)] The system\ $\mathfrak{C}=\{ C_k \mid 0\leq k\leq n\}$ is a coin operator system on $\mathcal{X}$.
  \item[(2)] There exist a unitary operator $U$ and a resolution of the identity $\{P_k \mid 0\leq k \leq n\}$ on $\mathcal{X}$ such that
   $C_k = P_kU$,\ \ $0\leq k\leq n$.
\end{enumerate}
\end{lemma}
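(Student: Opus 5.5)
The plan is to show that (2) forces a specific candidate for $U$, namely $U=\sum_{k=0}^n C_k$, and then to check that this candidate has the required properties. I can prove this unconditionally except for one step: the surjectivity of $U$ in (1)$\Rightarrow$(2). That step needs $\mathcal{X}$ to be finite-dimensional. This is the situation in all later uses of the lemma, where $\mathcal{X}=\mathbb{C}^{n+1}$ carries the coin. I will say explicitly below why the unrestricted infinite-dimensional reading cannot be proved.

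For (2)$\Rightarrow$(1), assume $C_k=P_kU$ with $U$ unitary and $\{P_k\}$ a resolution of the identity, i.e.\ orthogonal projections with $P_jP_k=\delta_{jk}P_k$ and $\sum_k P_k=I$.
\begin{itemize}
\item For $j\neq k$ we get $C_j^*C_k=U^*P_jP_kU=0$.
\item Summing, $\sum_k C_k^*C_k=U^*\bigl(\sum_k P_k\bigr)U=U^*U=I_{\mathcal{X}}$.
\end{itemize}
This direction holds in any Hilbert space.

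For (1)$\Rightarrow$(2), I first observe that if (2) holds at all, then $\sum_k C_k=\sum_k P_kU=U$. So the only possible choice is
\[
U:=\sum_{k=0}^n C_k,\qquad P_k:=\text{orthogonal projection onto }\overline{\mathrm{ran}\,C_k}.
\]
The steps are as follows.
\begin{itemize}
\item \emph{Orthogonal ranges.} The relation $C_j^*C_k=0$ for $j\neq k$ gives $\langle C_jx,C_ky\rangle=0$ for all $x,y$. Hence the closed ranges $R_k=\overline{\mathrm{ran}\,C_k}$ are mutually orthogonal, and the $P_k$ are mutually orthogonal projections.
\item \emph{$U$ is an isometry.} By orthogonality, $\|Ux\|^2=\sum_k\|C_kx\|^2=\langle x,\sum_kC_k^*C_kx\rangle=\|x\|^2$.
\item \emph{The factorization.} Since $P_kC_j=\delta_{jk}C_k$, we get $P_kU=C_k$.
\item \emph{Completing the projections.} Since $\mathrm{ran}\,U\subset\bigoplus_k R_k$, once $U$ is unitary we obtain $\bigoplus_kR_k=\mathcal{X}$, so $\sum_kP_k=I$.
\end{itemize}
Everything therefore reduces to the surjectivity of the isometry $U$.

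This surjectivity is the main obstacle. In finite dimensions it is immediate: an isometry is injective, so by rank--nullity it is surjective, hence unitary, and the proof is complete.

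In infinite dimensions the step genuinely fails, and no choice of $U$ can repair it, because $U=\sum_k C_k$ is forced. For a counterexample, take $\mathcal{X}=\ell^2(\mathbb{N})$ and $n=0$. Let $C_0$ be the unilateral shift, and note that $\dim\mathcal{X}=\infty\ge n+1$ as required. Then:
\begin{itemize}
\item the orthogonality conditions are vacuous and $C_0^*C_0=I$, so (1) holds;
\item but (2) would force $U=C_0$, which is not unitary.
\end{itemize}
The same construction works for any $n$: take one block to be a shift and the remaining blocks to be zero.

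I would therefore state and prove the lemma with $\mathcal{X}$ finite-dimensional. Equivalently, in the infinite-dimensional case one must add to (1) the hypothesis that $\sum_kC_k$ is surjective, i.e.\ that $\bigoplus_k\overline{\mathrm{ran}\,C_k}=\mathcal{X}$. With that hypothesis the argument above goes through verbatim.
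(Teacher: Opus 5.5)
The paper gives no proof of this lemma (it is quoted from \cite{Wang2022}), so there is no in-text argument to compare against. Your proof is correct and is the natural one.

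The (2)$\Rightarrow$(1) computation is fine. For (1)$\Rightarrow$(2), your observation that $U=\sum_k C_k$ is forced correctly reduces everything to the surjectivity of the isometry $U$, and that surjectivity is automatic precisely in finite dimensions. The $P_k$ are forced too, since $\operatorname{ran}C_k=P_k(\operatorname{ran}U)=\operatorname{ran}P_k$ once $U$ is onto.

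Your counterexample is valid. Take $C_0$ to be the unilateral shift $S$ on $\ell^2(\mathbb N)$ and $C_k=0$ for $k\ge 1$. Then condition (1) holds, but $\sum_k C_k=S$ is not unitary. So the lemma is false as literally stated, because the paper's definition of a coin operator system admits any separable $\mathcal X$ with $d_{\mathcal X}\ge n+1$.

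Your repair is the right one. Given (1), surjectivity of $\sum_k C_k$ is equivalent to the dual conditions $C_kC_j^*=0$ for $j\neq k$ and $\sum_k C_kC_k^*=I$.

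The same example also makes $W=\Xi_0\otimes S$, which is an isometry but not unitary. So the unitarity of $W$ asserted in Proposition~\ref{pro:W_unitary} also depends on this restriction. For the coin space $\mathbb C^{n+1}$ actually used in the paper, nothing is lost.
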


We take $\mathfrak{h}_n$ and $\{\Xi_k \mid k\in\mathbb{N}_n\}$
as the position space and shift operators respectively of discrete-time quantum walk on $(\Gamma_n,\mathfrak{E}_n)$. Then the walk takes $\mathfrak{h}_n\otimes \mathcal{K}$ as its state space and the states of the walk are represented by unit vectors in $\mathfrak{h}_n\otimes \mathcal{K}$. The time evolution of the walk is governed by
\begin{equation}
W = \sum_{k=0}^n \Xi_k \otimes C_k,
\end{equation}
which defines a quantum walk on the $2^{n+1}$-dimensional hypercube.

\section{Spin-Induced Model and Reduced Dynamics}

We now introduce a spin-coupled quantum walk on the hypercube within the QBN framework. Let $\Gamma_n$ be the vertex set of the $(n+1)$-dimensional hypercube, with position space $\mathfrak{h}_n$ and coin space $\mathcal K$ as defined before. We now introduce a separable complex Hilbert space $\mathfrak{h}_M$ as the spin environment. The index set is chosen to be $\mathbb{N}_M = \{0,1,\dots,M\}$ with $M = 2^{n+1} - 1$, so that $\mathbb{N}_M$ has exactly the same cardinality as the vertex set $\Gamma_n$. The space $\mathfrak{h}_M$ is equipped with the orthonormal basis  $\{Z_\gamma : \gamma \subseteq \mathbb{N}_M\}$. Here and throughout, $Z_\sigma$ with $\sigma\subseteq\mathbb N_n$ denotes position-space basis vectors, while $Z_\gamma$ with $\gamma\subseteq\mathbb N_M$ denotes spin-space basis vectors. The total Hilbert space is therefore
\begin{equation}\label{eq:total_space}
\mathcal{H} := \mathfrak{h}_n \otimes \mathcal{K} \otimes \mathfrak{h}_M.
\end{equation}
The spin environment is coupled to the walker through a position-dependent spin rotation operator, which we construct below. To this end, we first define a spin flip operator for each vertex.

Let $o: \Gamma_n \to \mathbb{N}_M$ be a fixed bijection. For each vertex $\sigma \in \Gamma_n$, $o(\sigma) \in \mathbb{N}_M$. We define $\mathfrak{S}_\sigma$ and $\mathfrak{S}_\sigma^*$ on the spin space $\mathfrak{h}_M$ by:
\begin{equation}
\mathfrak{S}_\sigma Z_\gamma = \partial_{o(\sigma)}Z_\gamma = \mathbf{1}_\gamma(o(\sigma)) Z_{\gamma\setminus\{o(\sigma)\}},\qquad
\mathfrak{S}_\sigma^* Z_\gamma = \partial_{o(\sigma)}^*Z_\gamma = (1 - \mathbf{1}_\gamma(o(\sigma)) Z_{\gamma\cup\{o(\sigma)\}}.
\end{equation}
The operators $\mathfrak{S}_\sigma$ and $\mathfrak{S}_\sigma^*$ have the same algebraic structure as $\partial_k$ and $\partial_k^*$.

The spin flip operator is then defined as
\begin{equation}
\Pi_\sigma^{\mathrm{spin}} := \mathfrak{S}_\sigma^* + \mathfrak{S}_\sigma \in \mathcal{B}(\mathfrak{h}_M).
\end{equation}
Its action on the basis is
\begin{equation}
\Pi_\sigma^{\mathrm{spin}} Z_\gamma = Z_{\gamma \triangle \{o(\sigma)\}},
\end{equation}
which flips $o(\sigma)$ in $\gamma$. 

Define the position-dependent spin rotation operator on the total Hilbert space $\mathcal H = \mathfrak h_n \otimes \mathcal K \otimes \mathfrak h_M$ as
\begin{equation}
R := \sum_{\sigma \in \Gamma_n} |Z_\sigma\rangle\langle Z_\sigma| \otimes I_{\mathcal K} \otimes e^{-i\phi_\sigma \Pi_\sigma^{\mathrm{spin}}}.
\label{eq:spin_rotation}
\end{equation}
 
The full evolution operator \(\tilde W = W R\) has the explicit form
\begin{equation}
\tilde W = \sum_{k=0}^{n}\sum_{\sigma\in\Gamma_n}
|Z_{\sigma\triangle\{k\}}\rangle\langle Z_\sigma|
\otimes C_k \otimes 
e^{-i\phi_{\sigma}\Pi_\sigma^{\mathrm{spin}}}.
\label{eq:W_final}
\end{equation}

\begin{proposition}
\label{pro:W_unitary}
The operator $\tilde W$ defined in \eqref{eq:W_final} is unitary on $\mathcal{H}$.
\end{proposition}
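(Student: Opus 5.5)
The approach is to write $\tilde W = WR$ and show that each factor is unitary on $\mathcal H$, since a product of unitaries is unitary. This factorization is how \eqref{eq:W_final} arises: $W$ is extended to $\mathcal H$ by tensoring with $I_{\mathfrak h_M}$. Expanding $\Xi_k = \sum_\sigma |Z_{\sigma\triangle\{k\}}\rangle\langle Z_\sigma|$, the product $(W\otimes I)R$ reproduces \eqref{eq:W_final} term by term. I will verify this expansion briefly, using the shift action of $\Xi_k$ recorded after Lemma 2.2.

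\textbf{Step 1: $R$ is unitary.} For each $\sigma$, $\Pi_\sigma^{\mathrm{spin}} = \mathfrak S_\sigma^*+\mathfrak S_\sigma$ is self-adjoint; indeed it is a self-adjoint involution, since it permutes the basis $\{Z_\gamma\}$ by $\gamma\mapsto\gamma\triangle\{o(\sigma)\}$. Hence
\[
e^{-i\phi_\sigma\Pi_\sigma^{\mathrm{spin}}}=\cos\phi_\sigma\, I - i\sin\phi_\sigma\,\Pi_\sigma^{\mathrm{spin}}
\]
is unitary for real $\phi_\sigma$. The projections $|Z_\sigma\rangle\langle Z_\sigma|$ are mutually orthogonal and sum to $I_{\mathfrak h_n}$. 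It follows that $R$ is block diagonal with unitary blocks, and a direct computation gives
\[
R^*R=\sum_{\sigma,\sigma'}|Z_\sigma\rangle\langle Z_\sigma|Z_{\sigma'}\rangle\langle Z_{\sigma'}|\otimes I\otimes U_\sigma^*U_{\sigma'}=I,
\]
where $U_\sigma = e^{-i\phi_\sigma\Pi_\sigma^{\mathrm{spin}}}$. The same computation gives $RR^*=I$.

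\textbf{Step 2: $W$ is unitary on $\mathfrak h_n\otimes\mathcal K$.} This is the main point. Writing the coin system as in Lemma \ref{coin-characterization}, we have $C_k=P_kU$. Then
\[
W^*W=\sum_{j,k}\Xi_j\Xi_k\otimes C_j^*C_k=\sum_k \Xi_k^2\otimes C_k^*C_k=I\otimes\sum_k C_k^*C_k=I,
\]
using $C_j^*C_k=0$ for $j\ne k$ and $\Xi_k^2=I$. For $WW^*$, note that
\[
WW^*=\sum_{j,k}\Xi_j\Xi_k\otimes C_jC_k^*, \qquad C_jC_k^*=P_jUU^*P_k=\delta_{jk}P_k .
\]
Hence $WW^*=\sum_k I\otimes P_k = I$. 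Alternatively, in finite dimensions $W^*W=I$ already implies $WW^*=I$, since $\mathfrak h_n$ is finite-dimensional and the model is meant for finite $\mathcal K$. If $\mathcal K$ is infinite-dimensional, however, one needs the characterization in Lemma \ref{coin-characterization}, which is why I use it.

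\textbf{Step 3: conclusion.} Then $W\otimes I_{\mathfrak h_M}$ is unitary, so $\tilde W=(W\otimes I)R$ is unitary. The only potential obstacle is the co-isometry $WW^*=I$ when $\mathcal K$ may be infinite-dimensional; this is handled via $C_k=P_kU$ as in Step 2.
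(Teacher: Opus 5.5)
Your proof is correct and takes the same route as the paper: factor $\tilde W=(W\otimes I)R$, note that $R$ is a direct sum of unitaries over the orthogonal position blocks, and use that $W$ is unitary. The one difference is that the paper cites the unitarity of $W$ as part of the standard QBN construction, whereas you verify it directly, including the co-isometry $WW^*=I$ via $C_k=P_kU$ from Lemma~\ref{coin-characterization}. That added detail is sound, but it does not change the argument.
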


\begin{proof}
The operator $W$ is unitary by the standard QBN quantum walk construction. For each $\sigma$, $e^{-i\phi_\sigma \Pi_\sigma^{\mathrm{spin}}}$ is unitary since $\Pi_\sigma^{\mathrm{spin}}$ is self-adjoint. Thus $R$ is unitary as a direct sum of unitaries over the orthogonal position subspaces. Therefore $\tilde W = W R$ is unitary.
\end{proof}

In translationally invariant systems, the mechanism for disorder-free localization replace the role of an external random potential. In our framework, this mechanism is provided by a family of conserved quantities arising from the spin environment. These conserved quantities decompose the total Hilbert space into an exponentially large number of invariant subspaces. 

In the following, any operator $A$ defined in one of the three factor spaces in \eqref{eq:total_space} is implicitly extended to $\mathcal{H}$ by $A \mapsto A \otimes I$ on the remaining factors, and we keep the same notation for the extended operator.

\begin{proposition}
\label{pro:conservation}
For each vertex $\sigma \in \Gamma_n$, the spin flip operator $\Pi_\sigma^{\mathrm{spin}}$ commutes with the evolution operator $W$:
\begin{equation}
[\Pi_\sigma^{\mathrm{spin}}, \tilde{W}] = 0,
\end{equation}
where \( [\Pi_\sigma^{\mathrm{spin}}, \tilde{W}] := \Pi_\sigma^{\mathrm{spin}}\tilde{W}-\tilde{W}\Pi_\sigma^{\mathrm{spin}}  \) denotes the commutator.
\end{proposition}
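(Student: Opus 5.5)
We will prove the commutation by a direct term-by-term computation on the explicit form \eqref{eq:W_final}, using only the algebra of the spin flip operators. Here $\Pi_\sigma^{\mathrm{spin}}$ is understood, per the convention above, as $I_{\mathfrak h_n}\otimes I_{\mathcal K}\otimes \Pi_\sigma^{\mathrm{spin}}$. Since it acts trivially on the position and coin factors, for each summand we have
\[
\Pi_\sigma^{\mathrm{spin}}\Bigl(|Z_{\tau\triangle\{k\}}\rangle\langle Z_\tau|\otimes C_k\otimes e^{-i\phi_\tau\Pi_\tau^{\mathrm{spin}}}\Bigr)
=|Z_{\tau\triangle\{k\}}\rangle\langle Z_\tau|\otimes C_k\otimes \Pi_\sigma^{\mathrm{spin}}e^{-i\phi_\tau\Pi_\tau^{\mathrm{spin}}},
\]
and similarly when $\Pi_\sigma^{\mathrm{spin}}$ is multiplied from the right. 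So it suffices to show that $[\Pi_\sigma^{\mathrm{spin}},e^{-i\phi_\tau\Pi_\tau^{\mathrm{spin}}}]=0$ on $\mathfrak h_M$ for all $\sigma,\tau\in\Gamma_n$. The proposition then follows by summing over $k$ and $\tau$.

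The key step is that the family $\{\Pi_\sigma^{\mathrm{spin}}\mid\sigma\in\Gamma_n\}$ is commuting. On the basis $\{Z_\gamma\}$ we have $\Pi_\sigma^{\mathrm{spin}}Z_\gamma=Z_{\gamma\triangle\{o(\sigma)\}}$. Hence
\[
\Pi_\sigma^{\mathrm{spin}}\Pi_\tau^{\mathrm{spin}}Z_\gamma=Z_{\gamma\triangle\{o(\tau)\}\triangle\{o(\sigma)\}}=\Pi_\tau^{\mathrm{spin}}\Pi_\sigma^{\mathrm{spin}}Z_\gamma,
\]
because symmetric difference is commutative and associative. This is the same reason the shift operators $\Xi_k$ commute. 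Alternatively, one can invoke $[\partial_j,\partial_k]=0$ together with the analogous relations for the creation operators.

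Since $\Pi_\tau^{\mathrm{spin}}$ is a self-adjoint involution, $(\Pi_\tau^{\mathrm{spin}})^2=I$. Therefore
\[
e^{-i\phi_\tau\Pi_\tau^{\mathrm{spin}}}=\cos\phi_\tau\, I-i\sin\phi_\tau\,\Pi_\tau^{\mathrm{spin}},
\]
which is a polynomial in $\Pi_\tau^{\mathrm{spin}}$. It therefore commutes with $\Pi_\sigma^{\mathrm{spin}}$.

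The only point needing care is the commutation relations between distinct QBN modes. The paper records $[\partial_j,\partial_k]=0$ but not the mixed relations $[\partial_j,\partial_k^*]=0$ for $j\neq k$. For this reason I would argue directly on the basis via the symmetric-difference formula, which avoids needing those mixed relations. Everything else is routine. The statement writes $W$ in the text but $\tilde W$ in the formula, and the argument is aimed at $\tilde W$. It also covers $W$ trivially, since $W$ acts as the identity on $\mathfrak h_M$.
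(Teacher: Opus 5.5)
Your proposal is correct and takes the same route as the paper: since $\Pi_\sigma^{\mathrm{spin}}$ acts trivially on the position and coin factors, everything reduces to $[\Pi_\sigma^{\mathrm{spin}}, e^{-i\phi_\tau\Pi_\tau^{\mathrm{spin}}}]=0$. The paper only asserts this last commutation, whereas you prove it from the symmetric-difference action on $\{Z_\gamma\}$ and the identity $e^{-i\phi_\tau\Pi_\tau^{\mathrm{spin}}}=\cos\phi_\tau\, I-i\sin\phi_\tau\,\Pi_\tau^{\mathrm{spin}}$; this fills a gap the paper leaves open rather than changing the argument.
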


\begin{proof}
Under the convention established above, $\Pi_\sigma^{\mathrm{spin}}$ acts as $I_{\mathfrak{h}_n} \otimes I_{\mathcal{K}} \otimes \Pi_\sigma^{\mathrm{spin}}$ on the total Hilbert space. It therefore acts trivially on the position and coin factors. Consequently, it commutes with the position projection $|Z_{\sigma \triangle \{k\}}\rangle\langle Z_{\sigma}|$ and the operator $C_k$. Its commutation with the spin rotation:
\begin{equation}
[\Pi_\sigma^{\mathrm{spin}}, e^{-i\phi_{\tau} \Pi_\tau^{\mathrm{spin}}}] = 0,
\end{equation}
Therefore, $[\Pi_\sigma^{\mathrm{spin}}, \tilde{W}] = 0$ for all $\sigma$.
\end{proof}

Since all $\Pi_\sigma^{\mathrm{spin}}$ commute with each other, they admit a complete set of simultaneous eigenstates. These eigenstates are precisely the Hadamard basis states, labeled by the eigenvalue configurations $\mathbf{s}= (s_\sigma)_{\sigma \in \Gamma_n}$, where $s_\sigma = \pm 1$. For each configuration $\mathbf{s}$, we define the eigenspace
\begin{equation}
\mathcal{H}_{\mathbf{s}} := \left\{ \Psi \in \mathcal{H} \;\middle|\; \Pi_\sigma^{\mathrm{spin}} \Psi = s_\sigma \Psi,\ \forall \sigma \in \Gamma_n \right\}.
\end{equation}

\begin{proposition}
\label{pro:eigenspace_decomp}
The total Hilbert space decomposes into a direct sum of invariant eigenspaces:
\begin{equation}
\mathcal{H} = \bigoplus_{\mathbf{s} \in \{\pm 1\}^{\Gamma_n}} \mathcal{H}_{\mathbf{s}}.
\end{equation}
Each $\mathcal{H}_{\mathbf{s}}$ is invariant under the evolution operator $W$. The total number of eigenspaces is
\begin{equation}
\#\{\mathcal{H}_{\mathbf{s}}\} = 2^{|\Gamma_n|} = 2^{2^{n+1}}.
\end{equation}
\end{proposition}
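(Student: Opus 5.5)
The plan is to reduce everything to the spin factor $\mathfrak{h}_M$ and then use Proposition~\ref{pro:conservation}. The operators $\Pi_\sigma^{\mathrm{spin}}$, $\sigma\in\Gamma_n$, are self-adjoint unitaries on $\mathfrak{h}_M$, since $\Pi_\sigma^{\mathrm{spin}}Z_\gamma=Z_{\gamma\triangle\{o(\sigma)\}}$ is an involutive permutation of the ONB. They pairwise commute, because flipping the bits $o(\sigma)$ and $o(\tau)$ commute as operations on subsets of $\mathbb{N}_M$. Moreover, $o$ is a bijection onto $\mathbb{N}_M$, so $\mathfrak{h}_M\cong(\mathbb{C}^2)^{\otimes 2^{n+1}}$, and under this identification $\Pi_\sigma^{\mathrm{spin}}$ is the Pauli $X$ acting on the tensor factor labelled $o(\sigma)$.

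Next, for each configuration $\mathbf{s}\in\{\pm1\}^{\Gamma_n}$, I will define the explicit vector
\[
\chi_{\mathbf{s}}:=2^{-2^{n}}\sum_{\gamma\subseteq\mathbb{N}_M}\Big(\prod_{\sigma:\,o(\sigma)\in\gamma}s_\sigma\Big)Z_\gamma ,
\]
which is the tensor product of the vectors $(Z_\emptyset+s_\sigma Z_{\{o(\sigma)\}})/\sqrt2$ in each factor. A direct computation with the basis action gives $\Pi_\sigma^{\mathrm{spin}}\chi_{\mathbf{s}}=s_\sigma\chi_{\mathbf{s}}$. For $\mathbf{s}\ne\mathbf{s}'$, the vectors $\chi_{\mathbf{s}}$ and $\chi_{\mathbf{s}'}$ are orthogonal, since they differ in the eigenvalue of some self-adjoint $\Pi_\sigma^{\mathrm{spin}}$. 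This gives $2^{2^{n+1}}=\dim\mathfrak{h}_M$ orthonormal joint eigenvectors, so $\{\chi_{\mathbf{s}}\}$ is an ONB of $\mathfrak{h}_M$ (the Hadamard basis). Consequently the joint eigenspace on the spin factor for $\mathbf{s}$ is exactly $\mathbb{C}\chi_{\mathbf{s}}$: any joint eigenvector with eigenvalues $\mathbf{s}$ is orthogonal to every $\chi_{\mathbf{s}'}$ with $\mathbf{s}'\neq\mathbf{s}$.

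Lifting to $\mathcal{H}$, I will show $\mathcal{H}_{\mathbf{s}}=\mathfrak{h}_n\otimes\mathcal{K}\otimes\mathbb{C}\chi_{\mathbf{s}}$. To do this, expand $\Psi\in\mathcal{H}$ as $\sum_{\mathbf{s}'}\Psi_{\mathbf{s}'}\otimes\chi_{\mathbf{s}'}$ using the ONB. The condition $\Pi_\sigma^{\mathrm{spin}}\Psi=s_\sigma\Psi$ for all $\sigma$ then forces $\Psi_{\mathbf{s}'}=0$ for $\mathbf{s}'\ne\mathbf{s}$. Since the $\chi_{\mathbf{s}}$ form an ONB, the orthogonal direct sum decomposition $\mathcal{H}=\bigoplus_{\mathbf{s}}\mathcal{H}_{\mathbf{s}}$ follows immediately, and the count of summands is $2^{|\Gamma_n|}=2^{2^{n+1}}$.

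For invariance, take $\Psi\in\mathcal{H}_{\mathbf{s}}$. By Proposition~\ref{pro:conservation}, $\Pi_\sigma^{\mathrm{spin}}\tilde W\Psi=\tilde W\Pi_\sigma^{\mathrm{spin}}\Psi=s_\sigma\tilde W\Psi$ for every $\sigma$, so $\tilde W\Psi\in\mathcal{H}_{\mathbf{s}}$. Because $\tilde W$ is unitary (Proposition~\ref{pro:W_unitary}) and the spaces are finite-dimensional, $\tilde W$ restricts to a unitary on each $\mathcal{H}_{\mathbf{s}}$. The statement's ``$W$'' should be read as the full evolution $\tilde W$. The bare walk $W$ acts trivially on the spin factor, so its invariance of $\mathcal{H}_{\mathbf{s}}$ is immediate anyway.

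The step I expect to need the most care is justifying the commutation $[\Pi_\sigma^{\mathrm{spin}},e^{-i\phi_\tau\Pi_\tau^{\mathrm{spin}}}]=0$ used in Proposition~\ref{pro:conservation}. I would prove it by checking that $\Pi_\sigma^{\mathrm{spin}}\Pi_\tau^{\mathrm{spin}}Z_\gamma=Z_{\gamma\triangle\{o(\sigma),o(\tau)\}}=\Pi_\tau^{\mathrm{spin}}\Pi_\sigma^{\mathrm{spin}}Z_\gamma$, which holds because symmetric difference is commutative and associative, and then passing to the power series of the exponential. The other point needing care is the completeness of the Hadamard basis, which the dimension count above handles.
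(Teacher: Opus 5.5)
Your proof is correct. It rests on the same basic idea as the paper: the $\Pi_\sigma^{\mathrm{spin}}$ are simultaneously diagonalized, and invariance follows from Proposition~\ref{pro:conservation}. You also correctly read $W$ as $\tilde W$, as the paper's own proof does. The difference is that you make explicit the steps the paper leaves abstract.

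\textbf{The paper's route.} It invokes the spectral theorem for the commuting family of self-adjoint unitaries $\Pi_\sigma^{\mathrm{spin}}$ to get $\mathcal H=\bigoplus_{\mathbf s}\mathcal H_{\mathbf s}$. It then simply asserts that the number of eigenspaces equals the number of sign configurations.

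\textbf{Your route.} You construct the Hadamard vectors $\chi_{\mathbf s}$ (the paper's $\mathcal S(\mathbf s)$) directly and check that they are orthonormal joint eigenvectors. The equality $2^{|\Gamma_n|}=\dim\mathfrak h_M$ then shows they form a basis, which gives $\mathcal H_{\mathbf s}=\mathfrak h_n\otimes\mathcal K\otimes\mathbb C\chi_{\mathbf s}$.

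\textbf{What your route adds.} The spectral theorem only decomposes $\mathcal H$ over the joint spectrum. The count $2^{2^{n+1}}$ additionally needs every sign pattern to occur, i.e.\ every $\mathcal H_{\mathbf s}$ to be nonzero. That fact depends on $o$ being a bijection, so that distinct flips act on distinct tensor factors, and your dimension count establishes it. Your identification of $\mathcal H_{\mathbf s}$ is also the description the paper later uses without proof, in Proposition~\ref{prop:reduced_evolution}, when it writes $\tilde W|_{\mathcal H_{\mathbf s}}$ as an operator on $\mathfrak h_n\otimes\mathcal K$. Finally, your check that $[\Pi_\sigma^{\mathrm{spin}},e^{-i\phi_\tau\Pi_\tau^{\mathrm{spin}}}]=0$ fills in a step the paper only states.

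The paper's version is shorter but relies on these points implicitly.
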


\begin{proof}
Since the operators $\Pi_\sigma^{\mathrm{spin}}$ are self-adjoint and commute pairwise, the spectral theorem for commuting families of self-adjoint unitary operators gives the direct sum decomposition. The invariance of each $\mathcal{H}_{\mathbf{s}}$ under $\tilde{W}$ follows from proposition \ref{pro:conservation}. The number of eigenspaces equals the number of possible eigenvalue configurations, which is $2^{|\Gamma_n|}$.
\end{proof}

 Let $\mathcal{S}: \{\pm1\}^{\Gamma_n} \to \mathfrak{h}_M$ be the map defined by
\begin{equation}\label{eq_S(s)}
    \mathcal{S}(\mathbf{s}) := \bigotimes_{\sigma \in \Gamma_n} \mathbf{s}_\sigma^X,
\end{equation}
where $\mathbf{s} = (s_\sigma)_{\sigma\in\Gamma_n}$ is the eigenvalue configuration and for each $\sigma\in\Gamma_n$, $\mathbf{s}_\sigma^X$ is the eigenvector of $\Pi_\sigma^{\mathrm{spin}}$ with $s_\sigma$ be the related eigenvalue 
\[
\Pi_\sigma^{\mathrm{spin}} \mathbf{s}_\sigma^X = s_\sigma \mathbf{s}_\sigma^X.
\]
 Thus $\mathcal{S}(\mathbf{s})$ is the Hadamard basis state corresponding to configuration $\mathbf{s}$. The set $\{\mathcal{S}(\mathbf{s}) : \mathbf{s}\in\{\pm1\}^{\Gamma_n}\}$ forms a complete orthonormal basis of $\mathfrak{h}_M$ and the operators $\Pi_\sigma^{\mathrm{spin}}$ are diagonal in the Hadamard basis.

\begin{remark}
For each $\mathbf{s} = (s_\sigma)_{\sigma\in\Gamma_n}$, $\mathcal{S}(\mathbf{s})$ can be obtained from the basis $\{Z_\gamma : \gamma \subseteq \mathbb{N}_M\}$ via the Walsh--Hadamard transform. The Hadamard basis states corresponding to $\sigma$ are
\[
\mathbf{s}_\sigma^X = \frac{1}{\sqrt{2}}\left(Z_\emptyset + s_\sigma Z_{\{o(\sigma)\}}\right),\qquad s_\sigma = \pm1.
\]
$\mathcal{S}(\mathbf{s}) = \bigotimes_{\sigma\in\Gamma_n} \mathbf{s}_\sigma^X$ is then related to the basis by
\[
\mathcal{S}(\mathbf{s}) = \frac{1}{2^{|\Gamma_n|/2}}
\sum_{\gamma\subseteq \mathbb{N}_M}
\left(\prod_{\sigma\in o^{-1}(\gamma)} s_\sigma\right) Z_\gamma,
\]
where $o^{-1}(\gamma) = \{\sigma\in\Gamma_n : o(\sigma)\in\gamma\}$ is the set of vertices whose images lie in $\gamma$. 
\end{remark}

In a fixed configuration $\mathbf s$ and corresponding eigenspace $\mathcal{H}_{\mathbf{s}}$, the spin rotation operator reduces to a scalar:
\begin{equation}
e^{-i\phi_{\sigma} \Pi_\sigma^{\mathrm{spin}}} \mathcal{S}(\mathbf{s}) = e^{-i\phi_{\sigma} s_\sigma} \mathcal{S}(\mathbf{s}).
\end{equation}

\begin{proposition}[Reduced evolution operator]
\label{prop:reduced_evolution}
For each spin configuration \(\mathbf{s}\in\{\pm1\}^{\Gamma_n}\), the restriction of  \(\tilde W\) to the invariant subspace \(\mathcal H_{\mathbf{s}}\) is the unitary operator
\begin{equation}\label{eq:W_s_def}
\tilde W_{\mathbf{s}}
:=
\tilde W\big|_{\mathcal H_{\mathbf{s}}}
=
\sum_{k=0}^{n} \sum_{\sigma \in \Gamma_n}
|Z_{\sigma \triangle \{k\}}\rangle\langle Z_\sigma|
\otimes C_k\,
e^{-i\phi_{\sigma} s_\sigma}.
\end{equation}
\end{proposition}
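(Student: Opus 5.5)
The plan is to identify $\mathcal H_{\mathbf s}$ explicitly with $\mathfrak h_n\otimes\mathcal K$ and then compute the action of $\tilde W$ on product vectors. First, since the spin Hilbert space $\mathfrak h_M$ carries the orthonormal Hadamard basis $\{\mathcal S(\mathbf s)\}$, and each $\Pi_\sigma^{\mathrm{spin}}$ is diagonal in it with $\Pi_\sigma^{\mathrm{spin}}\mathcal S(\mathbf s)=s_\sigma\mathcal S(\mathbf s)$, the joint eigenspace is
\[
\mathcal H_{\mathbf s}=\mathfrak h_n\otimes\mathcal K\otimes \mathbb C\,\mathcal S(\mathbf s).
\]
The inclusion $\supseteq$ is immediate. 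For $\subseteq$, expand an arbitrary $\Psi\in\mathcal H$ in the Hadamard basis of the third factor. Distinct configurations $\mathbf s'\neq\mathbf s$ differ in some $s'_\sigma$, and that $\Pi_\sigma^{\mathrm{spin}}$ separates them. Hence only the $\mathcal S(\mathbf s)$ component survives. This identification gives the canonical unitary $\mathfrak h_n\otimes\mathcal K\cong\mathcal H_{\mathbf s}$, $\psi\mapsto\psi\otimes\mathcal S(\mathbf s)$, and the formula in \eqref{eq:W_s_def} is to be read through it.

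Next, I apply \eqref{eq:W_final} to a product vector $Z_\tau\otimes\xi\otimes\mathcal S(\mathbf s)$. The factor $\langle Z_\sigma|Z_\tau\rangle$ kills every term with $\sigma\neq\tau$. In the surviving terms, the spin factor $e^{-i\phi_\tau\Pi_\tau^{\mathrm{spin}}}$ acts on its eigenvector $\mathcal S(\mathbf s)$. By functional calculus (or directly from $(\Pi^{\mathrm{spin}}_\tau)^2=I$, so that $e^{-i\phi\Pi}=\cos\phi\, I-i\sin\phi\,\Pi$) this gives the scalar $e^{-i\phi_\tau s_\tau}$, as noted just before the proposition. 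Thus
\[
\tilde W\big(Z_\tau\otimes\xi\otimes\mathcal S(\mathbf s)\big)=\sum_{k=0}^n Z_{\tau\triangle\{k\}}\otimes e^{-i\phi_\tau s_\tau}C_k\xi\otimes\mathcal S(\mathbf s).
\]
Under the identification above, this is exactly the right-hand side of \eqref{eq:W_s_def} applied to $Z_\tau\otimes\xi$. Extending by linearity over the basis $\{Z_\tau\}$ of $\mathfrak h_n$ and over $\xi\in\mathcal K$ gives the operator identity on all of $\mathcal H_{\mathbf s}$. This also re-confirms invariance, consistent with Proposition \ref{pro:eigenspace_decomp}.

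Finally, unitarity. $\tilde W$ is unitary by Proposition \ref{pro:W_unitary}, and $\mathcal H_{\mathbf s}$ is invariant under $\tilde W$. The orthogonal complement $\bigoplus_{\mathbf s'\neq\mathbf s}\mathcal H_{\mathbf s'}$ is a sum of invariant subspaces, so it is invariant too. Hence $\tilde W$ reduces $\mathcal H_{\mathbf s}$, and the restriction is an isometry of a finite-dimensional space onto itself, i.e.\ unitary. As an independent check, one can write $\tilde W_{\mathbf s}=W\,D_{\mathbf s}$ with $D_{\mathbf s}=\sum_\sigma e^{-i\phi_\sigma s_\sigma}|Z_\sigma\rangle\langle Z_\sigma|\otimes I_{\mathcal K}$ a diagonal unitary, and $W$ unitary.

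The only delicate point I expect is the first step: justifying that the joint eigenspace is exactly one-dimensional in the spin factor. That is, the Hadamard product states are the full simultaneous eigenbasis and no other vectors share the eigenvalue pattern $\mathbf s$. I would handle this with the tensor-product structure $\mathfrak h_M\cong\bigotimes_{\sigma}\mathbb C^2$ induced by the bijection $o$. Under it, $\Pi^{\mathrm{spin}}_\sigma$ is a Pauli-$X$ on the $\sigma$-th factor, and the count $2^{|\Gamma_n|}=\dim\mathfrak h_M$ forces each joint eigenspace to be one-dimensional.
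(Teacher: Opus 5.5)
Your proof is correct and follows the same route as the paper. You evaluate $\tilde W$ on vectors $Z_\tau\otimes\xi\otimes\mathcal S(\mathbf s)$, use that the spin rotation acts on $\mathcal S(\mathbf s)$ as the scalar $e^{-i\phi_\tau s_\tau}$, and get unitarity from unitarity of $\tilde W$ together with invariance of $\mathcal H_{\mathbf s}$. You also make explicit what the paper leaves implicit, namely that $\mathcal H_{\mathbf s}=\mathfrak h_n\otimes\mathcal K\otimes\mathbb C\,\mathcal S(\mathbf s)$, so the formula is read through $\psi\mapsto\psi\otimes\mathcal S(\mathbf s)$, and you add the direct factorization $\tilde W_{\mathbf s}=W D_{\mathbf s}$; both points are sound.
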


\begin{proof}
The subspace \(\mathcal H_{\mathbf{s}}\) is invariant under \(\tilde W\) by the conservation of the spin-flip operators. In the basis \(Z_\sigma\otimes e_j\otimes \mathcal S(\mathbf{s})\), the spin rotation acts as multiplication by \(e^{-i\phi_\sigma s_\sigma}\), while the position-coin walk \(W\) acts as \(\sum_k \Xi_k\otimes C_k\). Therefore the restriction is exactly the formula above. Unitarity follows from the fact that \(\tilde W\) is unitary and \(\mathcal H_{\mathbf{s}}\) is invariant.
\end{proof}
Since \(\mathcal H = \bigoplus_{\mathbf{s}\in\{\pm1\}^{\Gamma_n}} \mathcal H_{\mathbf{s}}\), the evolution operator $\tilde W$ is just the direct sum of reduced operators:
\begin{equation}\label{eq:direct_sum_of_W}
\tilde W = \bigoplus_{\mathbf{s}\in\{\pm1\}^{\Gamma_n}} \tilde W_{\mathbf{s}}.
\end{equation}

Next, we will derive position probability distribution and its path-integral representation for an arbitrary initial position-coin state in a given spin configuration. 

Let \(\mathbf s=(s_\sigma)_{\sigma\in\Gamma_n}\in\{\pm1\}^{\Gamma_n}\) be a fixed spin configuration, and let \(\mathcal S(\mathbf s)\) be the corresponding eigenstate of the spin flip operators. Consider an initial total state of the form
\[
\Psi_0 = \Phi_0 \otimes \mathcal S(\mathbf s),
\]
where \(\Phi_0\in\mathfrak h_n\otimes\mathcal K\) is an arbitrary unit vector. Then $\Psi_0\in\mathcal H_{\mathbf s}$,
and the total state at time t is given by
\begin{equation}\label{eq:Psi_t}
\Psi_t = \tilde W^t \Psi_0 = \tilde W_{\mathbf s}^t \Phi_0 \otimes \mathcal S(\mathbf s),
\end{equation}
 Thus at time \(t\), the probability of finding the walker at position \(\sigma\) is
\begin{equation}\label{eq:prob_time_t}
P_t(\sigma|\Phi_0,\mathbf s)
=
\left\| Q_\sigma \, \tilde W_{\mathbf s}^t \Phi_0 \right\|^2,
\end{equation}
where 
\begin{equation}\label{pos_projection}
Q_\sigma := |Z_\sigma\rangle\langle Z_\sigma|\otimes I_{\mathcal K}.
\end{equation}
We now derive a path-integral formula for an arbitrary unit vector $\Phi_0 = \sum_{\tau\in\Gamma_n} Z_\tau\otimes u_\tau$ in the fixed $\mathbf s$. 

Define the set of paths of length \(t\) from \(\tau\) to \(\sigma\) as
\begin{equation}\label{eq:path_set}
\Lambda_t(\sigma,\tau) := \{(\sigma_0,\dots,\sigma_t) : \sigma_0=\tau,\ \sigma_t=\sigma,\ \sigma_j\sim\sigma_{j-1}\ \text{for all } j=1,\dots,t\}.
\end{equation}
For a path
\[
\gamma=(\sigma_0,\dots,\sigma_t)\in\Lambda_t(\sigma,\tau),
\]
let \(k_j\in\mathbb N_n\) be the unique direction such that \(\sigma_j=\sigma_{j-1}\triangle\{k_j\}\). 

Define the coin amplitude vector
\begin{equation}\label{coin_amplitude}
\mathcal A_\gamma := C_{k_t}\cdots C_{k_1} u_\tau\in\mathcal K.
\end{equation}
For each vertex \(\eta\in\Gamma_n\), set
\[
v_\eta(\gamma) := \#\{m\in\{0,\dots,t-1\}:\sigma_m=\eta\},
\qquad
d_\eta(\gamma,\gamma') := v_\eta(\gamma)-v_\eta(\gamma').
\]

\begin{theorem}
\label{thm:path_integral_fixed_s}
Let \(\Phi_0 = \sum_{\tau\in\Gamma_n} Z_\tau\otimes u_\tau\) be a unit vector in \(\mathfrak h_n\otimes\mathcal K\), and let \(\mathbf s\in\{\pm1\}^{\Gamma_n}\) be fixed. Then for any \(t\ge 0\) and \(\sigma\in\Gamma_n\),
\begin{equation}\label{eq:path_integral_fixed_s}
P_t(\sigma|\Phi_0,\mathbf s)
=
\sum_{\tau,\tau'\in\Gamma_n}
\sum_{\substack{\gamma\in\Lambda_t(\sigma,\tau)\\\gamma'\in\Lambda_t(\sigma,\tau')}}
\langle \mathcal A_\gamma,\mathcal A_{\gamma'}\rangle
\,
\prod_{\eta\in\Gamma_n}
\left(e^{-i\phi_\eta s_\eta}\right)^{d_\eta(\gamma,\gamma')},
\end{equation}
where the product is over all vertices \(\eta\in\Gamma_n\). In particular, when \(\Phi_0=Z_\tau\otimes u_0\) is localized at a single vertex, this reduces to
\begin{equation}\label{eq:path_integral_local}
P_t(\sigma|\tau,u_0,\mathbf s)
=
\sum_{\gamma,\gamma'\in\Lambda_t(\sigma,\tau)}
\langle \mathcal A_\gamma,\mathcal A_{\gamma'}\rangle
\,
\prod_{\eta\in\Gamma_n}
\left(e^{-i\phi_\eta s_\eta}\right)^{d_\eta(\gamma,\gamma')}.
\end{equation}
\end{theorem}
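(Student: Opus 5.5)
The plan is to expand $\tilde W_{\mathbf s}^t\Phi_0$ directly using Proposition~\ref{prop:reduced_evolution} and then read off the $Z_\sigma$-component. Write $\tilde W_{\mathbf s}=\sum_{k,\eta}|Z_{\eta\triangle\{k\}}\rangle\langle Z_\eta|\otimes e^{-i\phi_\eta s_\eta}C_k$. First I compute a single step on an elementary vector:
\[
\tilde W_{\mathbf s}(Z_\eta\otimes u)=\sum_{k=0}^n Z_{\eta\triangle\{k\}}\otimes e^{-i\phi_\eta s_\eta}C_ku .
\]
The phase here is the one attached to the \emph{departure} vertex $\eta$.

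Next I iterate this $t$ times by induction on $t$, starting from $Z_\tau\otimes u_\tau$. The induction hypothesis is
\[
\tilde W_{\mathbf s}^t(Z_\tau\otimes u_\tau)=\sum_{\sigma}\sum_{\gamma\in\Lambda_t(\sigma,\tau)} Z_\sigma\otimes\Bigl(\prod_{m=0}^{t-1}e^{-i\phi_{\sigma_m}s_{\sigma_m}}\Bigr)\mathcal A_\gamma .
\]
For the inductive step, apply the one-step formula to each summand. The choices of $k_{t+1}$ are in bijection with extensions of $\gamma$ by one adjacent vertex, because each neighbour of $\sigma_t$ corresponds to a unique direction $k$ (Lemma on adjacency). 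The new phase factor is $e^{-i\phi_{\sigma_t}s_{\sigma_t}}$, and the coin vector becomes $C_{k_{t+1}}\mathcal A_\gamma$, which is exactly the coin amplitude of the extended path. The scalars commute with the $C_k$, so they can be collected in front.

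I then regroup the accumulated phase by vertex:
\[
\prod_{m=0}^{t-1}e^{-i\phi_{\sigma_m}s_{\sigma_m}}=\prod_{\eta\in\Gamma_n}\bigl(e^{-i\phi_\eta s_\eta}\bigr)^{v_\eta(\gamma)} .
\]
By linearity, summing over $\tau$ gives
\[
Q_\sigma\tilde W_{\mathbf s}^t\Phi_0=Z_\sigma\otimes\sum_\tau\sum_{\gamma\in\Lambda_t(\sigma,\tau)}\Bigl(\prod_\eta \bigl(e^{-i\phi_\eta s_\eta}\bigr)^{v_\eta(\gamma)}\Bigr)\mathcal A_\gamma .
\]

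Finally I take the squared norm, using $\|Z_\sigma\|=1$ and expanding $\|\sum_\gamma c_\gamma\mathcal A_\gamma\|^2=\sum_{\gamma,\gamma'}\overline{c_{\gamma'}}c_\gamma\langle\mathcal A_{\gamma'},\mathcal A_\gamma\rangle$. Since the phases have modulus one, $\overline{c_{\gamma'}}c_\gamma=\prod_\eta(e^{-i\phi_\eta s_\eta})^{v_\eta(\gamma)-v_\eta(\gamma')}=\prod_\eta(e^{-i\phi_\eta s_\eta})^{d_\eta(\gamma,\gamma')}$.

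The main point of care is the inner-product convention. The sum runs over all ordered pairs $(\gamma,\gamma')$, so swapping the labels shows the total equals its own conjugate; equivalently, the sum is invariant under relabeling. This lets me write the pairing as $\langle\mathcal A_\gamma,\mathcal A_{\gamma'}\rangle$ together with the factor $d_\eta(\gamma,\gamma')$, matching whichever linearity convention is used, provided the exponent sign and the inner-product order are chosen consistently. I will fix the convention that the inner product is conjugate-linear in the first slot and check this pairing explicitly.

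The case $t=0$ is trivial: paths have length zero and $v_\eta\equiv0$. The localized formula \eqref{eq:path_integral_local} follows by taking $u_\tau=u_0\delta_{\tau,\tau_0}$, which leaves only $\tau=\tau'=\tau_0$ in the double sum.
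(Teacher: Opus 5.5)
Your argument follows the paper's proof. You expand $Q_\sigma\tilde W_{\mathbf s}^t\Phi_0$ over paths, attaching to each path the departure-vertex phases $c_\gamma:=\prod_{m=0}^{t-1}e^{-i\phi_{\sigma_m}s_{\sigma_m}}$. You regroup these by vertex, so that $c_\gamma\overline{c_{\gamma'}}=\prod_\eta(e^{-i\phi_\eta s_\eta})^{d_\eta(\gamma,\gamma')}$, and then expand the squared norm. Your induction on $t$ only makes explicit what the paper asserts in one line, and the $t=0$ and localized cases are fine.

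The step that fails is the convention paragraph. The identity is not independent of the linearity convention, and neither relabeling nor the reality of the total can make it so.
\begin{itemize}
\item Swapping $(\tau,\gamma)\leftrightarrow(\tau',\gamma')$ turns a term $\langle\mathcal A_{\gamma'},\mathcal A_\gamma\rangle\, c_\gamma\overline{c_{\gamma'}}$ into $\langle\mathcal A_\gamma,\mathcal A_{\gamma'}\rangle\, c_{\gamma'}\overline{c_\gamma}$.
\item Conjugating a single term produces exactly the same expression.
\end{itemize}
So the order in the inner product and the sign of $d_\eta$ always flip together, and you can never reach $\langle\mathcal A_\gamma,\mathcal A_{\gamma'}\rangle\, c_\gamma\overline{c_{\gamma'}}$.

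Concretely, in the convention you propose (conjugate-linear in the first slot), your own expansion correctly gives
\[
P_t(\sigma|\Phi_0,\mathbf s)=\sum_{\tau,\tau'}\sum_{\gamma,\gamma'}\langle\mathcal A_{\gamma'},\mathcal A_\gamma\rangle\prod_{\eta}\bigl(e^{-i\phi_\eta s_\eta}\bigr)^{d_\eta(\gamma,\gamma')}.
\]
By contrast, the statement's right-hand side read in that convention equals $\|\sum_\tau\sum_\gamma\overline{c_\gamma}\,\mathcal A_\gamma\|^2$ rather than $\|\sum_\tau\sum_\gamma c_\gamma\,\mathcal A_\gamma\|^2$. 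These coincide when all $\langle\mathcal A_\gamma,\mathcal A_{\gamma'}\rangle$ are real (e.g.\ a real coin and real $u_\tau$), but not in general. Take two paths with $\mathcal A_1=v$, $\mathcal A_2=iv$, $c_1=1$, $c_2=i$: the statement read that way gives $4\|v\|^2$, while the true value is $0$.

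The fix is to adopt the convention the paper uses implicitly, and consistently elsewhere (compare the factor $(\lambda\bar\mu)^t$ in the proof of Theorem~\ref{thm:time_average_spectral}). Take $\langle\cdot,\cdot\rangle$ linear in the first argument. Then $\|\sum_\gamma c_\gamma\mathcal A_\gamma\|^2=\sum_{\gamma,\gamma'}c_\gamma\overline{c_{\gamma'}}\langle\mathcal A_\gamma,\mathcal A_{\gamma'}\rangle$, which is the stated formula directly, with no relabeling needed. Equivalently, your displayed expression already is the theorem once the inner product is read that way.
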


\begin{proof}
By linearity of \(\tilde W_{\mathbf s}^t\),
\[
Q_\sigma \tilde W_{\mathbf s}^t \Phi_0
=
\sum_{\tau\in\Gamma_n}
Q_\sigma \tilde W_{\mathbf s}^t (Z_\tau\otimes u_\tau).
\]
For each fixed \(\tau\), 
\[
Q_\sigma \tilde W_{\mathbf s}^t (Z_\tau\otimes u_\tau)
=
\sum_{\gamma=(\sigma_0,\dots,\sigma_t)\in\Lambda_t(\sigma,\tau)}
\left(\prod_{m=0}^{t-1} e^{-i\phi_{\sigma_m}s_{\sigma_m}}\right)
\mathcal A_\gamma,
\]
where the product is taken along the path \(\gamma\), and \(\sigma_0=\tau\), \(\sigma_t=\sigma\). Therefore
\[
Q_\sigma \tilde W_{\mathbf s}^t \Phi_0
=
\sum_{\tau\in\Gamma_n}
\sum_{\gamma\in\Lambda_t(\sigma,\tau)}
\left(\prod_{m=0}^{t-1} e^{-i\phi_{\sigma_m}s_{\sigma_m}}\right)
\mathcal A_\gamma.
\]
Taking the squared norm in \(\mathcal K\) and expanding the double sum over paths, we obtain
\[
P_t(\sigma|\Phi_0,\mathbf s)
=
\sum_{\tau,\tau'\in\Gamma_n}
\sum_{\substack{\gamma\in\Lambda_t(\sigma,\tau)\\\gamma'\in\Lambda_t(\sigma,\tau')}}
\langle \mathcal A_\gamma,\mathcal A_{\gamma'}\rangle
\,
\prod_{m=0}^{t-1} e^{-i\phi_{\sigma_m}s_{\sigma_m}}
\prod_{m'=0}^{t-1} e^{i\phi_{\sigma'_{m'}}s_{\sigma'_{m'}}}.
\]
Now group the factors according to the vertex \(\eta\in\Gamma_n\). Since each vertex \(\eta\) appears exactly \(v_\eta(\gamma)\) times among \(\sigma_0,\dots,\sigma_{t-1}\), the product over time steps can be rewritten as a product over vertices:
\[
\prod_{m=0}^{t-1} e^{-i\phi_{\sigma_m}s_{\sigma_m}}
\prod_{m'=0}^{t-1} e^{i\phi_{\sigma'_{m'}}s_{\sigma'_{m'}}}
=
\prod_{\eta\in\Gamma_n}
\left(e^{-i\phi_\eta s_\eta}\right)^{v_\eta(\gamma)-v_\eta(\gamma')}
=
\prod_{\eta\in\Gamma_n}
\left(e^{-i\phi_\eta s_\eta}\right)^{d_\eta(\gamma,\gamma')}.
\]
Substituting this into the expression for \(P_t(\sigma|\Phi_0,\mathbf s)\) gives the claimed formula. The localized case \(\Phi_0=Z_\tau\otimes u_0\) follows by setting all \(u_{\tau'}=0\) for \(\tau'\neq\tau\).
\end{proof}
\begin{remark}
It is interesting to note that if the initial spin state is the vacuum \(Z_\emptyset\), then the total initial state is
\[
\Psi_0 = \Phi_0 \otimes Z_\emptyset
= \frac{1}{2^{|\Gamma_n|/2}}
\sum_{\mathbf s\in\{\pm1\}^{\Gamma_n}}
\Phi_0 \otimes \mathcal S(\mathbf s).
\]
Consequently, the position probability distribution is the uniform average of the fixed-configuration probabilities:
\[
P_t(\sigma|\Phi_0,Z_\emptyset)
=
\frac{1}{2^{|\Gamma_n|}}
\sum_{\mathbf s\in\{\pm1\}^{\Gamma_n}}
P_t(\sigma|\Phi_0,\mathbf s).
\]
Substituting the path-integral formula from Theorem~\ref{thm:path_integral_fixed_s} and using the independence of the spin variables, we obtain
\begin{equation}\label{eq:path_integral_vaccum}
P_t(\sigma|\Phi_0,Z_\emptyset)
=
\sum_{\tau,\tau'\in\Gamma_n}
\sum_{\substack{\gamma\in\Lambda_t(\sigma,\tau)\\\gamma'\in\Lambda_t(\sigma,\tau')}}
\langle \mathcal A_\gamma,\mathcal A_{\gamma'}\rangle
\prod_{\eta\in\Gamma_n}
\cos\left(\phi_\eta\, d_\eta(\gamma,\gamma')\right).
\end{equation}
\end{remark}

\section{Time-Averaged probability distribution and Disorder-Free Localization}

We are interested in the long-time average of the position probability distribution for the fixed spin configuration \(\mathbf s\):
\begin{equation}\label{eq:time_average_s}
\bar P_\sigma(\Phi_0,\mathbf s)
:=
\lim_{T\to\infty}\frac{1}{T}\sum_{t=0}^{T-1}
P_t(\sigma|\Phi_0,\mathbf s).
\end{equation}
Using \eqref{eq:prob_time_t}, the time-average position probability in the fixed $\mathbf s$ is
\[
\bar P_\sigma(\Phi_0,\mathbf s)
=
\lim_{T\to\infty}\frac{1}{T}\sum_{t=0}^{T-1}
\left\|Q_\sigma \tilde W_{\mathbf s}^t \Phi_0\right\|^2.
\]

Recall that \(\tilde W_{\mathbf s}\) is a unitary operator, it admits a spectral decomposition
\[
\tilde W_{\mathbf s}
=
\sum_{\lambda\in E(\tilde W_{\mathbf s})}
\lambda \, \mathcal P_{\mathbf s,\lambda},
\]
where \(E(\tilde W_{\mathbf s})\) is the finite set of eigenvalues, and \(\mathcal P_{\mathbf s,\lambda}\) is the orthogonal projection onto the eigenspace associated with \(\lambda\). Since \(\tilde W_{\mathbf s}\) is unitary, $|\lambda|=1,\forall \lambda$, so \(\lambda = e^{i\theta_\lambda}\) for some \(\theta_\lambda\in[0,2\pi)\).

\begin{theorem}
\label{thm:time_average_spectral}
Let \(\Phi_0\in\mathfrak h_n\otimes\mathcal K\) be a unit vector and let \(\mathbf s\in\{\pm1\}^{\Gamma_n}\) be fixed. Then the time-averaged position probability is given by
\begin{equation}\label{eq:time_average_spec}
\bar P_\sigma(\Phi_0,\mathbf s)
=
\sum_{\lambda\in E(\tilde W_{\mathbf s})}
\left\| Q_\sigma\, \mathcal P_{\mathbf s,\lambda}\Phi_0 \right\|^2,
\end{equation}
where \(E(\tilde W_{\mathbf s})\) is the set of eigenvalues of \(\tilde W_{\mathbf s}\), and \(\mathcal P_{\mathbf s,\lambda}\) is the orthogonal projection onto the eigenspace associated with \(\lambda\).

\end{theorem}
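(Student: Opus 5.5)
The plan is to expand $\tilde W_{\mathbf s}^t\Phi_0$ spectrally and then average over time. Since $\tilde W_{\mathbf s}$ is unitary on the finite-dimensional space $\mathfrak h_n\otimes\mathcal K$ (Proposition~\ref{prop:reduced_evolution}), it is normal. Its eigenprojections $\mathcal P_{\mathbf s,\lambda}$ are therefore mutually orthogonal and sum to the identity, which gives
\[
\tilde W_{\mathbf s}^t\Phi_0=\sum_{\lambda\in E(\tilde W_{\mathbf s})}\lambda^t\,\mathcal P_{\mathbf s,\lambda}\Phi_0 .
\]
Applying $Q_\sigma$ from \eqref{pos_projection} and expanding the squared norm yields
\[
P_t(\sigma|\Phi_0,\mathbf s)=\sum_{\lambda,\mu\in E(\tilde W_{\mathbf s})}\bar\mu^{\,t}\lambda^{t}\,\langle Q_\sigma\mathcal P_{\mathbf s,\mu}\Phi_0,\,Q_\sigma\mathcal P_{\mathbf s,\lambda}\Phi_0\rangle .
\]
Because $|\mu|=1$, we have $\bar\mu^{\,t}\lambda^t=(\lambda\bar\mu)^t=e^{i(\theta_\lambda-\theta_\mu)t}$.

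Next I split this finite double sum into a diagonal part ($\lambda=\mu$) and an off-diagonal part. Each diagonal term is independent of $t$ and equals $\|Q_\sigma\mathcal P_{\mathbf s,\lambda}\Phi_0\|^2$. For an off-diagonal term set $z=\lambda\bar\mu$. Then $|z|=1$, and $z\neq1$ because distinct eigenvalues are distinct points on the unit circle. The geometric sum gives
\[
\Bigl|\frac1T\sum_{t=0}^{T-1}z^t\Bigr|=\frac{|1-z^T|}{T|1-z|}\le\frac{2}{T|1-z|}\longrightarrow0 .
\]

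The sum has only finitely many terms, at most $|E(\tilde W_{\mathbf s})|^2$, so the limit can be taken term by term. This also shows that the limit defining \eqref{eq:time_average_s} exists, and it gives \eqref{eq:time_average_spec}. The only point that needs care is the existence of the Cesàro limit: the oscillating cross terms vanish only because the eigenvalues are distinct (so $z\neq1$) and there are finitely many of them (so the bound is uniform). Both facts come from finite dimensionality.
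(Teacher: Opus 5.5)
Your proposal is correct and follows the paper's proof: expand $\tilde W_{\mathbf s}^t\Phi_0$ over the eigenprojections, expand the squared norm as a double sum over eigenvalue pairs, and observe that the Ces\`aro average of $(\lambda\bar\mu)^t$ tends to $1$ for $\lambda=\mu$ and to $0$ otherwise. Your explicit geometric-sum bound, and your remark that the argument also shows the limit in \eqref{eq:time_average_s} exists, make precise a step the paper asserts without justification.
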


\begin{proof}
For each \(t\in\mathbb N\),
\[
\tilde W_{\mathbf s}^t
=
\sum_{\lambda\in E(\tilde W_{\mathbf s})}
\lambda^t\, \mathcal P_{\mathbf s,\lambda}.
\]
Applying this to \(\Phi_0\) gives
\[
Q_\sigma \tilde W_{\mathbf s}^t \Phi_0
=
\sum_{\lambda\in E(\tilde W_{\mathbf s})}
\lambda^t\, Q_\sigma \mathcal P_{\mathbf s,\lambda}\Phi_0.
\]
Taking the squared norm and expanding the double sum, we get
\[
\left\| Q_\sigma \tilde W_{\mathbf s}^t \Phi_0 \right\|^2
=
\sum_{\lambda,\mu\in E(\tilde W_{\mathbf s})}
(\lambda\bar\mu)^t\,
\left\langle
Q_\sigma \mathcal P_{\mathbf s,\lambda}\Phi_0,\,
Q_\sigma \mathcal P_{\mathbf s,\mu}\Phi_0
\right\rangle.
\]
Since \(|\lambda|=|\mu|=1\), and
\[
\frac{1}{T}\sum_{t=0}^{T-1} (\lambda\bar\mu)^t
\longrightarrow
\begin{cases}
1, & \lambda=\mu,\\
0, & \lambda\neq \mu.
\end{cases}
\]
Therefore,
\[
\begin{aligned}
\bar P_\sigma(\Phi_0,\mathbf s)
&=
\lim_{T\to\infty}
\frac{1}{T}\sum_{t=0}^{T-1}
\left\| Q_\sigma \tilde W_{\mathbf s}^t \Phi_0 \right\|^2 \\
&=
\sum_{\lambda\in E(\tilde W_{\mathbf s})}
\left\| Q_\sigma \mathcal P_{\mathbf s,\lambda}\Phi_0 \right\|^2.
\end{aligned}
\]
\end{proof}

\begin{definition}[Disorder-free localization for a fixed spin configuration]
\label{def:DFL}
We say that disorder-free localization (DFL) occurs if there exist a spin configuration \(\mathbf s\in\{\pm1\}^{\Gamma_n}\) and a unit vector \(\Phi_0\in\mathfrak h_n\otimes\mathcal K\) such that the time-averaged position probability distribution
\[
\left(\bar P_\sigma(\Phi_0,\mathbf s)\right)_{\sigma\in\Gamma_n}
\]
is not uniform on \(\Gamma_n\), i.e.,
\[
\exists\,\mathbf s,\ \exists\,\Phi_0\ \text{with}\ \|\Phi_0\|=1,\ 
\exists\,\sigma,\sigma'\in\Gamma_n:
\bar P_\sigma(\Phi_0,\mathbf s)\neq \bar P_{\sigma'}(\Phi_0,\mathbf s).
\]
\end{definition}

\begin{remark}
\label{rem:DFL_definition}
The definition uses a fixed spin configuration \(\mathbf s\). This is justified because the total evolution \(\tilde W\) preserves each invariant sector \(\mathcal H_{\mathbf s}\). Hence any initial state in \(\mathcal H_{\mathbf s}\) remains in that sector at all times, and its long-time behavior is governed by the restriction \(\tilde W_{\mathbf s}\). Consequently, if there exist a spin configuration \(\mathbf s\) and a state in \(\mathcal H_{\mathbf s}\) whose time-averaged position probability is non-uniform, then DFL occurs. Thus, to demonstrate DFL it suffices to exhibit one spin configuration \(\mathbf s\) and one initial state in \(\mathcal H_{\mathbf s}\) for which the time-averaged position probability is not uniform.
\end{remark}

\begin{theorem}[General DFL criterion for a fixed spin configuration]
\label{thm:DFL_general}
Disorder-free localization occurs if and only if there exist a spin configuration \(\mathbf s\in\{\pm1\}^{\Gamma_n}\), an eigenvalue \(\lambda\in E(\tilde W_{\mathbf s})\), and a non-zero vector \(\Phi\in\operatorname{Ran}\mathcal P_{\mathbf s,\lambda}\) such that the position marginal probability distribution
\[
\left(\|Q_\sigma \Phi\|^2\right)_{\sigma\in\Gamma_n}
\]
is not constant. Equivalently,
\[
\exists\,\mathbf s\in\{\pm1\}^{\Gamma_n},\ 
\exists\,\lambda\in E(\tilde W_{\mathbf s}),\ 
\exists\,\Phi\in\operatorname{Ran}\mathcal P_{\mathbf s,\lambda}\setminus\{0\},\ 
\exists\,\sigma,\sigma'\in\Gamma_n:
\|Q_\sigma \Phi\|^2 \neq \|Q_{\sigma'} \Phi\|^2.
\]
\end{theorem}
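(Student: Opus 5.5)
We prove the two implications separately, using Theorem~\ref{thm:time_average_spectral} and Definition~\ref{def:DFL}. Throughout, the phrase ``uniform'' for a distribution on $\Gamma_n$ means every entry equals $1/2^{n+1}$. Since $\sum_\sigma \|Q_\sigma\Phi\|^2=\|\Phi\|^2$, a non-zero vector has constant marginal $(\|Q_\sigma\Phi\|^2)_\sigma$ iff its normalization has the uniform marginal. Likewise, $(\bar P_\sigma(\Phi_0,\mathbf s))_\sigma$ is a probability distribution, because each $P_t(\cdot|\Phi_0,\mathbf s)$ is one.

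\emph{Sufficiency.} Suppose $\mathbf s$, $\lambda$ and $\Phi\in\operatorname{Ran}\mathcal P_{\mathbf s,\lambda}\setminus\{0\}$ have a non-constant marginal. Put $\Phi_0:=\Phi/\|\Phi\|$. Then $\mathcal P_{\mathbf s,\mu}\Phi_0=\delta_{\lambda\mu}\Phi_0$, because the eigenprojections are mutually orthogonal. Formula \eqref{eq:time_average_spec} therefore collapses to $\bar P_\sigma(\Phi_0,\mathbf s)=\|Q_\sigma\Phi\|^2/\|\Phi\|^2$. This is non-constant in $\sigma$, so DFL occurs in the sense of Definition~\ref{def:DFL}. (Equivalently, $\tilde W_{\mathbf s}^t\Phi_0=\lambda^t\Phi_0$, so $P_t$ does not depend on $t$.)

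\emph{Necessity, by contraposition.} Assume that for every $\mathbf s$ and every $\lambda\in E(\tilde W_{\mathbf s})$, every non-zero $\Phi\in\operatorname{Ran}\mathcal P_{\mathbf s,\lambda}$ has constant marginal. By the normalization remark, this means $\|Q_\sigma\Phi\|^2=\|\Phi\|^2/2^{n+1}$ for all $\sigma$, and this identity also holds trivially for $\Phi=0$. Now take any unit $\Phi_0$ and any $\mathbf s$. Apply the identity to $\Phi=\mathcal P_{\mathbf s,\lambda}\Phi_0$ for each $\lambda$ and sum using \eqref{eq:time_average_spec}:
\[
\bar P_\sigma(\Phi_0,\mathbf s)=\frac{1}{2^{n+1}}\sum_{\lambda\in E(\tilde W_{\mathbf s})}\|\mathcal P_{\mathbf s,\lambda}\Phi_0\|^2=\frac{1}{2^{n+1}}\|\Phi_0\|^2=\frac{1}{2^{n+1}}.
\]
The middle equality is Pythagoras for the orthogonal resolution $\sum_\lambda\mathcal P_{\mathbf s,\lambda}=I$. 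Hence the time average is uniform for every $\mathbf s$ and every $\Phi_0$, so DFL does not occur.

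\emph{Main obstacle.} The only delicate point is that the hypothesis must control entire eigenspaces, not just one chosen eigenvector. This is exactly why the criterion quantifies over \emph{all} vectors of $\operatorname{Ran}\mathcal P_{\mathbf s,\lambda}$. It lets us apply the hypothesis to the arbitrary component $\mathcal P_{\mathbf s,\lambda}\Phi_0$, so no diagonalization within degenerate eigenspaces is needed. The cross terms between different eigenvalues already vanish in Theorem~\ref{thm:time_average_spectral}, so nothing further is required.
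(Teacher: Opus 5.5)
Your proof is correct and follows the paper's proof. Sufficiency collapses the spectral formula of Theorem~\ref{thm:time_average_spectral} to the single eigenvalue $\lambda$, and necessity notes that if every component $\mathcal P_{\mathbf s,\lambda}\Phi_0$ had a constant position marginal, the time average would be constant; your contrapositive with the explicit value $1/2^{n+1}$ is a slightly more explicit form of the paper's direct argument.
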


\begin{proof}

\emph{Sufficiency.}
If there exist \(\mathbf s\), \(\lambda\), and \(\Phi\in\operatorname{Ran}\mathcal P_{\mathbf s,\lambda}\setminus\{0\}\) such that \(\{\|Q_\sigma\Phi\|^2\}_{\sigma}\) is not constant, then take \(\Phi_0=\Phi/\|\Phi\|\). Since \(\Phi\) lies in the \(\lambda\)-eigenspace, the sum in \eqref{eq:time_average_spec} reduces to the single term \(\lambda\), and hence
\[
\bar P_\sigma(\Phi_0,\mathbf s)
=
\left\| Q_\sigma \Phi_0 \right\|^2,
\]
which is non-uniform. Thus DFL occurs.

\emph{Necessity.}
Suppose DFL occurs. Then there exist \(\mathbf s\) and a unit vector \(\Phi_0\) such that \(\{\bar P_\sigma(\Phi_0,\mathbf s)\}_{\sigma}\) is not uniform. By \eqref{eq:time_average_spec}, if for every eigenvalue \(\lambda\in E(\tilde W_{\mathbf s})\) and every \(\sigma,\sigma'\in\Gamma_n\) we had
\[
\left\| Q_\sigma \mathcal P_{\mathbf s,\lambda}\Phi_0 \right\|^2
=
\left\| Q_{\sigma'} \mathcal P_{\mathbf s,\lambda}\Phi_0 \right\|^2,
\]
then the sum would also be uniform. Hence there must exist at least one eigenvalue \(\lambda\) and two vertices \(\sigma,\sigma'\) such that
\[
\left\| Q_\sigma \mathcal P_{\mathbf s,\lambda}\Phi_0 \right\|^2
\neq
\left\| Q_{\sigma'} \mathcal P_{\mathbf s,\lambda}\Phi_0 \right\|^2.
\]
Setting \(\Phi:=\mathcal P_{\mathbf s,\lambda}\Phi_0\), we obtain a non-zero vector in \(\operatorname{Ran}\mathcal P_{\mathbf s,\lambda}\) with non-uniform position marginal. This completes the proof.
\end{proof}

We now specialize the general DFL criterion to a constructive method. For a fixed spin configuration \(\mathbf s\), non-uniform vectors in the eigenvalue-one subspace \(\mathcal F_{\mathbf s}\) are characterized by a fixed-point equation, which we derive below and then solve explicitly for the Grover walk.

\begin{proposition}[Fixed-point equation]
\label{pro:fixed_point_equation}
Let \(\mathbf s\in\{\pm1\}^{\Gamma_n}\) be fixed, and let
\begin{equation}\label{eq:fixed_space_F_s}
\mathcal F_{\mathbf s} = \{\Phi\in\mathfrak h_n\otimes\mathcal K : \tilde W_{\mathbf s}\Phi = \Phi\}
\end{equation}
be the eigenvalue-one subspace of \(\tilde W_{\mathbf s}\). A unit vector
\[
\Phi = \sum_{\sigma\in\Gamma_n} Z_\sigma \otimes u_\sigma
\]
belongs to \(\mathcal F_{\mathbf s}\) if and only if the family \(\{u_\sigma\}_{\sigma\in\Gamma_n}\) satisfies
\begin{equation}\label{eq:fixed_point}
u_\sigma
=
\sum_{k=0}^{n}
C_k\,
e^{-i\phi_{\sigma\triangle\{k\}} s_{\sigma\triangle\{k\}}}\,
u_{\sigma\triangle\{k\}},
\qquad \forall \sigma\in\Gamma_n.
\end{equation}
\end{proposition}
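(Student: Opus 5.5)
The plan is to compute \(\tilde W_{\mathbf s}\Phi\) directly from the explicit form \eqref{eq:W_s_def} in Proposition~\ref{prop:reduced_evolution}. Then I will compare coefficients of the orthonormal position basis \(\{Z_\sigma\}_{\sigma\in\Gamma_n}\).

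\textbf{Step 1: apply the operator.} Writing \(\Phi=\sum_{\tau\in\Gamma_n} Z_\tau\otimes u_\tau\) and using \(\langle Z_\sigma, Z_\tau\rangle=\delta_{\sigma\tau}\), each term \(|Z_{\tau\triangle\{k\}}\rangle\langle Z_\tau|\otimes C_k e^{-i\phi_\tau s_\tau}\) sends \(Z_\tau\otimes u_\tau\) to \(Z_{\tau\triangle\{k\}}\otimes e^{-i\phi_\tau s_\tau}C_k u_\tau\). This gives
\[
\tilde W_{\mathbf s}\Phi=\sum_{k=0}^n\sum_{\tau\in\Gamma_n} Z_{\tau\triangle\{k\}}\otimes e^{-i\phi_\tau s_\tau}C_k u_\tau .
\]

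\textbf{Step 2: reindex.} For each fixed \(k\), the map \(\tau\mapsto\sigma=\tau\triangle\{k\}\) is a bijection of \(\Gamma_n\), since it is an involution: \(\tau=\sigma\triangle\{k\}\). Substituting,
\[
\tilde W_{\mathbf s}\Phi=\sum_{\sigma\in\Gamma_n} Z_\sigma\otimes \sum_{k=0}^n C_k\,e^{-i\phi_{\sigma\triangle\{k\}}s_{\sigma\triangle\{k\}}}u_{\sigma\triangle\{k\}}.
\]
The scalar phase commutes with \(C_k\), so its position in the product is irrelevant.

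\textbf{Step 3: compare coefficients.} Every vector of \(\mathfrak h_n\otimes\mathcal K\) has a unique expansion \(\sum_\sigma Z_\sigma\otimes w_\sigma\) with \(w_\sigma\in\mathcal K\), because \(\{Z_\sigma\}\) is an orthonormal basis of \(\mathfrak h_n\). Hence \(\tilde W_{\mathbf s}\Phi=\Phi\) holds if and only if the \(\mathcal K\)-coefficients agree for every \(\sigma\), which is exactly \eqref{eq:fixed_point}. This proves both directions at once.

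The unit-norm assumption plays no role; it only restricts attention to normalized fixed points. The only real obstacle is bookkeeping: the phase must be attached to the source vertex \(\sigma\triangle\{k\}\), not the target \(\sigma\), since in \eqref{eq:W_s_def} the factor \(e^{-i\phi_\sigma s_\sigma}\) is indexed by the vertex being acted on. Once the reindexing in Step 2 is done carefully, the argument is routine.
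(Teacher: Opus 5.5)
Your proposal is correct and follows essentially the same route as the paper: apply $\tilde W_{\mathbf s}$ termwise to $Z_\tau\otimes u_\tau$, relabel $\sigma=\tau\triangle\{k\}$, and compare the $\mathcal K$-coefficients in the orthonormal basis $\{Z_\sigma\}$. Your explicit remark that this relabeling is a bijection for each fixed $k$ (being an involution) fills in a step the paper leaves implicit.
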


\begin{proof}
By the definition of \(\tilde W_{\mathbf s}\), for any \(\tau\in\Gamma_n\), we have
\[
\tilde W_{\mathbf s}(Z_\tau\otimes u_\tau)
=
\sum_{k=0}^{n}
Z_{\tau\triangle\{k\}}
\otimes C_k\,
e^{-i\phi_\tau s_\tau}\, u_\tau.
\]
Summing over \(\tau\) gives
\[
\tilde W_{\mathbf s}\Phi
=
\sum_{\tau\in\Gamma_n}
\sum_{k=0}^{n}
Z_{\tau\triangle\{k\}}
\otimes C_k\,
e^{-i\phi_\tau s_\tau}\, u_\tau.
\]
Relabeling the position index as \(\sigma = \tau\triangle\{k\}\), we obtain \(\tau = \sigma\triangle\{k\}\), and the phase factor becomes
\[
e^{-i\phi_\tau s_\tau}
=
e^{-i\phi_{\sigma\triangle\{k\}} s_{\sigma\triangle\{k\}}}.
\]
Thus
\[
\tilde W_{\mathbf s}\Phi
=
\sum_{\sigma\in\Gamma_n}
Z_\sigma \otimes
\left(
\sum_{k=0}^{n}
C_k\,
e^{-i\phi_{\sigma\triangle\{k\}} s_{\sigma\triangle\{k\}}}\,
u_{\sigma\triangle\{k\}}
\right).
\]
Since \(\{Z_\sigma\}_{\sigma\in\Gamma_n}\) is an orthonormal basis, the equality \(\tilde W_{\mathbf s}\Phi = \Phi\) holds if and only if the coefficients of \(Z_\sigma\) coincide for every \(\sigma\). Therefore \(\Phi\in\mathcal F_{\mathbf s}\) is equivalent to
\[
u_\sigma
=
\sum_{k=0}^{n}
C_k\,
e^{-i\phi_{\sigma\triangle\{k\}} s_{\sigma\triangle\{k\}}}\,
u_{\sigma\triangle\{k\}},
\qquad \forall \sigma\in\Gamma_n,
\]
which completes the proof.
\end{proof}

\begin{corollary}
\label{cor:fixed_point_DFL}
Suppose that for some spin configuration $\mathbf s\in\{\pm1\}^{\Gamma_n}$, the fixed-point equation \eqref{eq:fixed_point} admits a non-zero solution $\Phi\in\mathcal F_{\mathbf s}$ whose position marginal probability distribution $\{\|Q_\sigma\Phi\|^2\}_{\sigma\in\Gamma_n}$ is not constant. Then DFL occurs.
\end{corollary}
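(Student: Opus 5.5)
This corollary is the special case $\lambda=1$ of the sufficiency direction of Theorem~\ref{thm:DFL_general}, and I will prove it by reducing to that theorem.

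\emph{Step 1: the fixed point lies in an eigenspace.} Let $\Phi\neq 0$ solve \eqref{eq:fixed_point} for the given $\mathbf s$. By Proposition~\ref{pro:fixed_point_equation}, $\Phi\in\mathcal F_{\mathbf s}$, so $\tilde W_{\mathbf s}\Phi=\Phi$. That proposition is stated for unit vectors, but both the fixed-point equation and the condition $\tilde W_{\mathbf s}\Phi=\Phi$ are linear and homogeneous in $\{u_\sigma\}$. The equivalence therefore holds for any nonzero $\Phi$ after normalizing. Hence $1\in E(\tilde W_{\mathbf s})$ and $\Phi\in\operatorname{Ran}\mathcal P_{\mathbf s,1}\setminus\{0\}$.

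\emph{Step 2: invoke the criterion.} By hypothesis the marginal $(\|Q_\sigma\Phi\|^2)_{\sigma\in\Gamma_n}$ is not constant. So the triple $(\mathbf s,\lambda=1,\Phi)$ meets the hypothesis of the sufficiency part of Theorem~\ref{thm:DFL_general}, and DFL occurs.

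\emph{Step 3: optional direct check.} One can also verify the conclusion from Theorem~\ref{thm:time_average_spec}. Take $\Phi_0=\Phi/\|\Phi\|$. Then $\mathcal P_{\mathbf s,\mu}\Phi_0=0$ for every $\mu\neq 1$ and $\mathcal P_{\mathbf s,1}\Phi_0=\Phi_0$. So
\[
\bar P_\sigma(\Phi_0,\mathbf s)=\|Q_\sigma\Phi\|^2/\|\Phi\|^2 .
\]
Dividing by the positive constant $\|\Phi\|^2$ preserves non-constancy, so Definition~\ref{def:DFL} is satisfied.

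There is no real obstacle here. The only point needing care is the normalization mismatch between the unit-vector phrasing of Proposition~\ref{pro:fixed_point_equation} and the merely nonzero $\Phi$ in the corollary, and linearity handles it.
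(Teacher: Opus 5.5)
Your proposal is correct and follows the paper's own argument: the paper's proof is exactly your Step 3 (normalize $\Phi_0=\Phi/\|\Phi\|$, note $\mathcal P_{\mathbf s,\lambda}\Phi_0=0$ for $\lambda\neq 1$, and read off $\bar P_\sigma(\Phi_0,\mathbf s)=\|Q_\sigma\Phi\|^2/\|\Phi\|^2$ from \eqref{eq:time_average_spec}), and this is also the $\lambda=1$ case of the sufficiency part of Theorem~\ref{thm:DFL_general} that you invoke in Step 2. The care you take over normalization in Step 1 is harmless but not needed, since the corollary already assumes $\Phi\in\mathcal F_{\mathbf s}$ directly.
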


\begin{proof}
Let $\Phi_0:=\Phi/\|\Phi\|$ be the normalized fixed point. Since $\Phi\in\mathcal F_{\mathbf s}$, we have $\mathcal P_{\mathbf s,\lambda}\Phi_0=0$ for all $\lambda\neq 1$, and $\mathcal P_{\mathbf s,1}\Phi_0=\Phi_0$. By \eqref{eq:time_average_spec} we obtain
\[
\bar P_\sigma(\Phi_0,\mathbf s)
=
\sum_{\lambda\in E(\tilde W_{\mathbf s})}
\left\| Q_\sigma\, \mathcal P_{\mathbf s,\lambda}\Phi_0 \right\|^2
=
\left\| Q_\sigma \Phi_0 \right\|^2
=
\frac{\|Q_\sigma \Phi\|^2}{\|\Phi\|^2}.
\]
Since the latter is non-uniform, the definition of DFL is satisfied.
\end{proof}

We now specialize to the general Grover coin, the operator \(C_k\) is given by
\[
C_k = \frac{2}{n+1} \sum_{j=0}^{n} |e_k\rangle \langle e_j| - |e_k\rangle \langle e_k|.
\]

And the coupling
\[
\phi_\sigma = \frac{|\sigma|}{n+1}\pi.
\]
where \(|\sigma|\) denotes the cardinality of \(\sigma\). Equivalently, \(|\sigma|\) is the Hamming distance from the empty set \(\emptyset\) to \(\sigma\).

Throughout the rest of this subsection, \(d = n+1\).

\subsection{Disorder-free localization in even dimensions}
\label{subsec:DFL_even}

We now present explicit non-uniform fixed points for the spin-coupled Grover walk in dimensions \(d=4\) and \(d=6\), establishing disorder-free localization in these cases. In both cases we take the all \(+1\) spin configuration \(s_\sigma=+1\) for all vertices \(\sigma\), and the distance-dependent coupling \(\phi_\sigma=|\sigma|\pi/d\).

\begin{proposition}[Non-uniform fixed point for \(d=4\)]
\label{prop:DFL_d4}
There exists a non-zero vector \(\Phi\in\mathfrak h_3\otimes\mathbb C^4\) satisfying the fixed-point equation
\[
\tilde W_{\mathbf s}\Phi=\Phi,
\]
where \(\tilde W_{\mathbf s}\) is the reduced evolution operator on the \(4\)-dimensional hypercube with Grover coin and coupling \(\phi_\sigma=|\sigma|\pi/4\), such that the position weights \(\|Q_\sigma\Phi\|^2\) are not constant.
\end{proposition}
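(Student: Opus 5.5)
We will apply Corollary~\ref{cor:fixed_point_DFL}: we exhibit an explicit non-zero solution of the fixed-point equation \eqref{eq:fixed_point} for \(d=4\), \(\mathbf s\equiv+1\), \(\phi_\sigma=|\sigma|\pi/4\), and check that its position weights are not constant. Write \(\omega:=e^{-i\pi/4}\), so that \(e^{-i\phi_\tau s_\tau}=\omega^{|\tau|}\) and \(\omega^8=1\). For \(u\in\mathbb C^4\) put \(S(u):=\sum_j u(j)\). The Grover coin then acts as \(C_k u = \bigl(\tfrac12 S(u)-u(k)\bigr)e_k\). Hence \eqref{eq:fixed_point} becomes the scalar system
\[
u_\sigma(k)=\omega^{|\sigma\triangle\{k\}|}\Bigl(\tfrac12 S(u_{\sigma\triangle\{k\}})-u_{\sigma\triangle\{k\}}(k)\Bigr),\qquad \sigma\in\Gamma_3,\ k\in\mathbb N_3 .
\]
This is \(64\) linear equations in \(64\) unknowns over \(\mathbb Q(\omega)\), and we need a non-trivial kernel vector of \(I-\tilde W_{\mathbf s}\).

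\emph{Symmetric ansatz.} The phases depend only on \(|\sigma|\), and the Grover coin is invariant under permutations of directions. So the symmetric group on \(\{0,1,2,3\}\), acting simultaneously on positions and coin labels, commutes with \(\tilde W_{\mathbf s}\). I would first look for a fixed point in the invariant sector. There \(u_\sigma(k)=a_m\) if \(k\in\sigma\) and \(u_\sigma(k)=b_m\) if \(k\notin\sigma\), with \(m=|\sigma|\). In this sector \(S_m:=S(u_\sigma)=m a_m+(4-m)b_m\), and the system collapses to eight equations:
\[
a_m=\omega^{m-1}\bigl(\tfrac12 S_{m-1}-b_{m-1}\bigr)\ (1\le m\le 4),\qquad b_m=\omega^{m+1}\bigl(\tfrac12 S_{m+1}-a_{m+1}\bigr)\ (0\le m\le 3).
\]
I would compute the \(8\times8\) determinant of this reduced system. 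If it vanishes, I would solve explicitly by expressing everything in terms of \(b_0\) and propagating up the levels. The position weight is then \(\|Q_\sigma\Phi\|^2=m|a_m|^2+(4-m)|b_m|^2\), and it remains to check that two levels differ, e.g.\ that \(4|b_0|^2\ne 4|a_4|^2\) or that some level vanishes while another does not.

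\emph{Main obstacle and fallback.} The real difficulty is guaranteeing that \(1\) is an eigenvalue at all. A naive attempt with flip-flop face states, which satisfy \(S(u_\sigma)=0\), fails: it would require \(\omega^{|\sigma|+|\sigma\triangle\{k\}|}=1\), which is impossible because that exponent is always odd. If the symmetric sector also has trivial kernel, I would decompose under the permutation group into other isotypic components. The first candidates are vectors antisymmetric under a transposition \((a\,b)\), or supported on the two-dimensional subcube spanned by the other two directions, coupled to the coordinate structure. In each component, Fourier analysis along the levels again gives a small system. If necessary, I would compute the kernel of \(I-\tilde W_{\mathbf s}\) on the full \(64\)-dimensional space, rationalize a basis vector with entries in \(\mathbb Q(\omega)\), and present it in a table.

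In either case, the final proof consists of three checks: write the vector explicitly, verify the \(64\) scalar equations using Proposition~\ref{pro:fixed_point_equation}, and exhibit two vertices with different weights. Corollary~\ref{cor:fixed_point_DFL} then yields the claim.
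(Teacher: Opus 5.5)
\paragraph{Gap.} As written, this is a search strategy rather than a proof. The proposition is purely existential, and its whole content is one concrete non-zero solution of \eqref{eq:fixed_point} with non-constant weights. You never produce such a solution. Your symmetric branch depends on an $8\times 8$ determinant that you do not evaluate. Your fallback (compute the kernel on all $64$ coordinates and tabulate a basis vector) assumes exactly what you yourself flag as the main obstacle, namely that $1\in E(\tilde W_{\mathbf s})$ at all. So your closing ``three checks'' have no object to check. Your preliminary steps are correct: the formula $C_ku=(\tfrac12S(u)-u(k))e_k$, the $S_4$-equivariance, the reduction to eight equations, and the parity argument excluding fixed points with $S(u_\sigma)\equiv 0$.

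\paragraph{How your first branch closes it.} The gap closes along your first branch, and this gives a different certificate from the paper's. The reduced system is singular. Write $\zeta=\bar\omega=e^{i\pi/4}$. The level-$0,1,2$ equations give, in order, $a_1=b_0$, $b_1=b_0(1+2\zeta)/3$, $a_2=b_0(1+2\omega)/3$, $b_2=ib_1$ and $a_3=-ia_2$. The three remaining equations, $b_2=\tfrac{\omega^3}{2}(a_3+b_3)$, $a_4=\tfrac{\omega^3}{2}(3a_3-b_3)$ and $b_3=-a_4$, are consistent in the two unknowns $a_4,b_3$, and yield $a_4=-b_3=ib_0$. Taking $b_0=3$,
\[
(b_0,a_1,b_1,a_2,b_2,a_3,b_3,a_4)=\bigl(3,\,3,\,1+2\zeta,\,1+2\omega,\,i(1+2\zeta),\,-i(1+2\omega),\,-3i,\,3i\bigr),
\]
which satisfies all eight equations exactly; this is easy to check by hand. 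The level weights $m|a_m|^2+(4-m)|b_m|^2$ are $36,\ 24+6\sqrt2,\ 20+8\sqrt2,\ 24+6\sqrt2,\ 36$ for $m=0,\dots,4$. Note that the test you suggest, $4|b_0|^2\neq 4|a_4|^2$, fails here, since $|a_4|=|b_0|$. Compare $\emptyset$ with a two-element vertex instead.

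\paragraph{Comparison with the paper.} The paper instead tabulates a sparse unit-modulus vector on all $64$ coordinates, found by symbolic computation and not $S_4$-invariant. It has weights $0,2,4$, with zero weight at $\emptyset,\{1,2\},\{0,3\},\{0,1,2,3\}$, and the paper asserts it satisfies all $64$ equations by direct substitution. The paper's vector gives a more striking profile, with vanishing weight at some vertices. Your route gives a verification of only eight equations, and it shows that the eigenvalue-one space meets the $S_4$-invariant sector. That last point contradicts the remark after Proposition~\ref{prop:DFL_d6} that the whole fixed-point space lies in the standard isotypic component.
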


\begin{proof}
Let \(d=4\) and denote the vertex set by \(\Gamma_3\). The fixed-point equation \eqref{eq:fixed_point} reads
\[
u_\sigma
=
\sum_{k=0}^{3}
C_k\,
e^{-i|\sigma\triangle\{k\}|\pi/4}\,
u_{\sigma\triangle\{k\}},
\qquad \forall\sigma\in\Gamma_3,
\]
where \(u_\sigma\in\mathbb C^4\) is the coin vector at vertex \(\sigma\). This is a system of \(64\) linear equations over the cyclotomic field \(\mathbb Q(e^{i\pi/4})\), which we solve symbolically. A non-zero solution is given in Table~\ref{tab:DFL_d4}.

\begin{table}[htbp]
\centering

\begin{tabular}{c c c c c c c}
\toprule
Vertex \(\sigma\) & \(|\sigma|\) & Weight &
\(u_\sigma^{(0)}\) & \(u_\sigma^{(1)}\) & \(u_\sigma^{(2)}\) & \(u_\sigma^{(3)}\) \\
\midrule
\(\emptyset\) & 0 & 0 & 0 & 0 & 0 & 0 \\
\(\{0\}\) & 1 & 2 & 0 & \(e^{3\pi i/4}\) & \(e^{-\pi i/4}\) & 0 \\
\(\{1\}\) & 1 & 2 & \(e^{3\pi i/4}\) & 0 & 0 & \(e^{-\pi i/4}\) \\
\(\{0,1\}\) & 2 & 4 & \(-i\) & \(-i\) & \(e^{5\pi i/4}\) & \(e^{5\pi i/4}\) \\
\(\{2\}\) & 1 & 2 & \(e^{-\pi i/4}\) & 0 & 0 & \(e^{3\pi i/4}\) \\
\(\{0,2\}\) & 2 & 4 & \(i\) & \(e^{\pi i/4}\) & \(i\) & \(e^{\pi i/4}\) \\
\(\{1,2\}\) & 2 & 0 & 0 & 0 & 0 & 0 \\
\(\{0,1,2\}\) & 3 & 2 & 0 & \(1\) & \(-1\) & 0 \\
\(\{3\}\) & 1 & 2 & 0 & \(e^{-\pi i/4}\) & \(e^{3\pi i/4}\) & 0 \\
\(\{0,3\}\) & 2 & 0 & 0 & 0 & 0 & 0 \\
\(\{1,3\}\) & 2 & 4 & \(e^{\pi i/4}\) & \(i\) & \(e^{\pi i/4}\) & \(i\) \\
\(\{0,1,3\}\) & 3 & 2 & \(1\) & 0 & 0 & \(-1\) \\
\(\{2,3\}\) & 2 & 4 & \(e^{-3\pi i/4}\) & \(e^{-3\pi i/4}\) & \(-i\) & \(-i\) \\
\(\{0,2,3\}\) & 3 & 2 & \(-1\) & 0 & 0 & \(1\) \\
\(\{1,2,3\}\) & 3 & 2 & 0 & \(-1\) & \(1\) & 0 \\
\(\{0,1,2,3\}\) & 4 & 0 & 0 & 0 & 0 & 0 \\
\bottomrule
\end{tabular}
\caption{Non-uniform unit-modulus fixed point for \(d=4\). All nonzero components are of the form \(e^{ik\pi/4}\). Vertices with weight \(0\) are omitted.}
\label{tab:DFL_d4}
\end{table}

A direct substitution into the fixed-point equation shows that the vector is indeed a fixed point. All nonzero components have modulus \(1\), hence the position weight \(\|u_\sigma\|^2\) equals the number of nonzero components. The table shows weights \(0,2,4\) with multiplicities \(4,8,4\), respectively, so the normalized position distribution is non-uniform (e.g.\ \(p_\emptyset=0\) while \(p_{\{0,1\}}=1/8\)). Thus DFL occurs.
\end{proof}

\begin{proposition}[Non-uniform fixed point for \(d=6\)]
\label{prop:DFL_d6}
There exists a non-zero vector \(\Phi\in\mathfrak h_5\otimes\mathbb C^6\) satisfying the fixed-point equation
\[
\tilde W_{\mathbf s}\Phi=\Phi,
\]
where \(\tilde W_{\mathbf s}\) is the reduced evolution operator on the \(6\)-dimensional hypercube with Grover coin and coupling \(\phi_\sigma=|\sigma|\pi/6\), such that the position weights \(\|Q_\sigma\Phi\|^2\) are not constant.
\end{proposition}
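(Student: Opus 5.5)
The plan mirrors the proof of Proposition~\ref{prop:DFL_d4}: we exhibit an explicit non-zero solution of the fixed-point equation \eqref{eq:fixed_point} with \(n=5\), \(s_\sigma=+1\) and \(\phi_\sigma=|\sigma|\pi/6\), and then apply Corollary~\ref{cor:fixed_point_DFL}. With these choices \eqref{eq:fixed_point} reads
\[
u_\sigma=\sum_{k=0}^{5} C_k\, e^{-i|\sigma\triangle\{k\}|\pi/6}\, u_{\sigma\triangle\{k\}},\qquad \sigma\in\Gamma_5 .
\]
This is a homogeneous linear system of \(64\cdot 6=384\) equations over the cyclotomic field \(\mathbb Q(e^{i\pi/6})\). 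Equivalently, we must show that \(\ker(\tilde W_{\mathbf s}-I)\) is non-trivial and contains a vector whose vertex weights \(\|u_\sigma\|^2\) are not constant.

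\textbf{Reducing the search.} The Grover coin has the explicit action
\[
(C_k v)_j=\delta_{jk}\Bigl(\tfrac{2}{6}\textstyle\sum_l v_l - v_k\Bigr),
\]
so only the \(k\)-th component of \(C_k u_{\sigma\triangle\{k\}}\) is non-zero. The equation therefore decouples componentwise:
\[
u_\sigma^{(k)}=e^{-i|\sigma\triangle\{k\}|\pi/6}\Bigl(\tfrac13 S_{\sigma\triangle\{k\}}-u^{(k)}_{\sigma\triangle\{k\}}\Bigr),
\qquad S_\tau:=\sum_l u^{(l)}_\tau .
\]
Guided by the \(d=4\) table, I would look for solutions with \(S_\tau=0\) at every vertex. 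These are eigenvectors of the Grover coin with eigenvalue \(-1\), and for them the condition collapses to the edge relations
\[
u_\sigma^{(k)}=-e^{-i|\sigma\triangle\{k\}|\pi/6}\, u^{(k)}_{\sigma\triangle\{k\}} .
\]
Applying this relation along an edge in both directions gives a consistency requirement: \(u_\sigma^{(k)}\neq 0\) forces
\[
e^{-i(|\sigma|+|\sigma\triangle\{k\}|)\pi/6}=1,
\]
and since \(|\sigma|+|\sigma\triangle\{k\}|\) is odd and at most \(11\), this can never hold. So the ansatz \(S\equiv 0\) fails as stated. I would not pursue it further and would instead run the general solver.

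\textbf{Main computation and main obstacle.} I would compute \(\ker(\tilde W_{\mathbf s}-I)\) symbolically, using the hyperoctahedral-type symmetries that preserve \(|\sigma|\) (coordinate permutations of \(\Gamma_5\), which commute with the Grover coin) to block-diagonalize and shrink the linear algebra. In the numerics quoted in the introduction, the \(d=6\) case has a non-trivial eigenvalue-one space, so the kernel should be non-empty. From it I would choose a sparse, "unit-modulus" basis element analogous to Table~\ref{tab:DFL_d4}. The non-trivial content of the proof is this existence step: certifying exactly, not just numerically, that \(1\) is an eigenvalue of \(\tilde W_{\mathbf s}\).

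\textbf{Verification and non-uniformity.} Once an exact kernel vector is in hand, I would tabulate it and verify it by direct substitution into the componentwise equations above, which is a finite check in \(\mathbb Q(e^{i\pi/6})\). For non-uniformity, the cleanest route is to exhibit one vertex with \(u_\sigma=0\) and another with \(u_\sigma\neq0\). Failing that, one can average the kernel vector over a symmetry subgroup, or add to it a second kernel vector, to break constancy of the weights. Corollary~\ref{cor:fixed_point_DFL} then gives DFL for \(d=6\).
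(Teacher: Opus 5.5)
\textbf{There is a genuine gap.} Your proposal never produces the object the proposition asserts exists, and gives no argument that it exists. The whole content is that $\ker(\tilde W_{\mathbf s}-I)$ contains a vector with non-constant vertex weights. You defer exactly this to a symbolic kernel computation that is never carried out. Your only support is the paper's own remark that numerics found a non-trivial fixed space, which is the statement being proved, and only numerically.

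The non-uniformity fallback is also unsupported. A subspace can consist entirely of constant-weight vectors, so nothing you say forces some kernel vector to be non-uniform. Averaging over coordinate permutations can even annihilate a fixed vector: the average of the paper's witness over all of $S_6$ is $0$. What you have is a search strategy. The step you yourself call ``the non-trivial content'' is missing.

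\textbf{What the paper does.} It supplies the witness explicitly (Table~\ref{tab:DFL_d6}) and verifies it by substitution. Non-uniformity is then immediate, since $u_\emptyset=0$ while $u_{\{4,5\}}=e_0-e_3\neq 0$.

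Your obstruction to $S\equiv 0$ is correct, and it points to the right relaxation. The witness has $S_\tau=0$ at every even-weight vertex but $S_\tau\neq 0$ at weight $3$. Your edge relation therefore holds in only one direction, and the parity contradiction disappears.

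\textbf{Structure of the witness} (with $\zeta=e^{i\pi/6}$). Use the matching $\{0,3\},\{1,5\},\{2,4\}$.
\begin{itemize}
\item Call $T$ a transversal if it meets each pair exactly once, and set $\varepsilon_T=(-1)^{1+|T\cap\{3,4,5\}|}$. This sign flips when an element of $T$ is swapped for its partner.
\item On transversals: $u_T^{(j)}=\varepsilon_T\zeta^{-2}$ for $j\in T$, and $u_T^{(j)}=-i\varepsilon_T$ for $j\notin T$.
\item For a $2$-set $\{a,b\}$ not in the matching, with remaining pair $\{p,q\}$: $u_{\{a,b\}}=-\varepsilon_{\{a,b,p\}}e_p-\varepsilon_{\{a,b,q\}}e_q$.
\item For a $4$-set $\sigma=T\cup\{t'\}$, with $t\in T$ and $t'$ its partner: $u_\sigma=-\zeta\bigl(\varepsilon_{\sigma\setminus\{t\}}e_t+\varepsilon_{\sigma\setminus\{t'\}}e_{t'}\bigr)$.
\item All other $u_\sigma=0$.
\end{itemize}
Every instance of your componentwise equation is then either $0=0$ for support reasons, or reduces to the identity $\zeta^{-3}=-i$. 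The verification can therefore be done by hand.

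\textbf{Caution about the printed table.} It omits the row $\sigma=\{1,2,3\}$, which must be $(-i,\zeta^{-2},\zeta^{-2},\zeta^{-2},-i,-i)$. Without it, the equation already fails at $\sigma=\{1,2\}$, component $3$. With the row restored, the support consists of $24$ vertices of weight $2$ and $8$ of weight $6$, not $20$ and $4$ as the paper states.
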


\begin{proof}
Let \(d=6\), vertex set \(\Gamma_5\), and let \(\zeta = e^{i\pi/6}\) be a 12th root of unity. The fixed-point equation is
\[
u_\sigma = \sum_{k=0}^{5} C_k\, e^{-i|\sigma\triangle\{k\}|\pi/6} u_{\sigma\triangle\{k\}},
\qquad \forall \sigma\in\Gamma_5.
\]
A non-zero solution is given in Table~\ref{tab:DFL_d6}; all components are expressed in terms of \(\zeta\). The table lists exactly those vertices with non-zero coin vector; all other vertices have \(u_\sigma=0\).

\begin{table}[htbp]
\centering
\begin{tabular}{c c c c c c c c c}
\toprule
Vertex \(\sigma\) & \(|\sigma|\) & Weight &
\(c_0\) & \(c_1\) & \(c_2\) & \(c_3\) & \(c_4\) & \(c_5\) \\
\midrule
(0,0,0,0,1,1) & 2 & 2 & \(1\) & 0 & 0 & \(-1\) & 0 & 0 \\
(0,0,0,1,0,1) & 2 & 2 & 0 & 0 & \(1\) & 0 & \(-1\) & 0 \\
(0,0,0,1,1,0) & 2 & 2 & 0 & \(1\) & 0 & 0 & 0 & \(-1\) \\
(0,0,1,0,0,1) & 2 & 2 & \(-1\) & 0 & 0 & \(1\) & 0 & 0 \\
(0,0,1,1,0,0) & 2 & 2 & 0 & \(-1\) & 0 & 0 & 0 & \(1\) \\
(0,1,0,0,1,0) & 2 & 2 & \(-1\) & 0 & 0 & \(1\) & 0 & 0 \\
(0,1,0,1,0,0) & 2 & 2 & 0 & 0 & \(-1\) & 0 & \(1\) & 0 \\
(0,1,1,0,0,0) & 2 & 2 & \(1\) & 0 & 0 & \(-1\) & 0 & 0 \\
(1,0,0,0,0,1) & 2 & 2 & 0 & 0 & \(-1\) & 0 & \(1\) & 0 \\
(1,0,0,0,1,0) & 2 & 2 & 0 & \(-1\) & 0 & 0 & 0 & \(1\) \\
(1,0,1,0,0,0) & 2 & 2 & 0 & \(1\) & 0 & 0 & 0 & \(-1\) \\
(1,1,0,0,0,0) & 2 & 2 & 0 & 0 & \(1\) & 0 & \(-1\) & 0 \\
\addlinespace
(0,0,1,1,1,1) & 4 & 2 & 0 & 0 & \(-\zeta\) & 0 & \(\zeta\) & 0 \\
(0,1,0,1,1,1) & 4 & 2 & 0 & \(-\zeta\) & 0 & 0 & 0 & \(\zeta\) \\
(0,1,1,1,0,1) & 4 & 2 & 0 & \(\zeta\) & 0 & 0 & 0 & \(-\zeta\) \\
(0,1,1,1,1,0) & 4 & 2 & 0 & 0 & \(\zeta\) & 0 & \(-\zeta\) & 0 \\
(1,0,0,1,1,1) & 4 & 2 & \(-\zeta\) & 0 & 0 & \(\zeta\) & 0 & 0 \\
(1,0,1,0,1,1) & 4 & 2 & 0 & 0 & \(\zeta\) & 0 & \(-\zeta\) & 0 \\
(1,0,1,1,0,1) & 4 & 2 & \(\zeta\) & 0 & 0 & \(-\zeta\) & 0 & 0 \\
(1,1,0,0,1,1) & 4 & 2 & 0 & \(\zeta\) & 0 & 0 & 0 & \(-\zeta\) \\
(1,1,0,1,1,0) & 4 & 2 & \(\zeta\) & 0 & 0 & \(-\zeta\) & 0 & 0 \\
(1,1,1,0,0,1) & 4 & 2 & 0 & \(-\zeta\) & 0 & 0 & 0 & \(\zeta\) \\
(1,1,1,0,1,0) & 4 & 2 & 0 & 0 & \(-\zeta\) & 0 & \(\zeta\) & 0 \\
(1,1,1,1,0,0) & 4 & 2 & \(-\zeta\) & 0 & 0 & \(\zeta\) & 0 & 0 \\
\addlinespace
(0,0,0,1,1,1) & 3 & 6 & \(-\zeta^3\) & \(-\zeta^3\) & \(-\zeta^3\) & \(1-\zeta^2\) & \(1-\zeta^2\) & \(1-\zeta^2\) \\
(0,0,1,1,0,1) & 3 & 6 & \(\zeta^3\) & \(\zeta^3\) & \(-1+\zeta^2\) & \(-1+\zeta^2\) & \(\zeta^3\) & \(-1+\zeta^2\) \\
(0,1,0,1,1,0) & 3 & 6 & \(\zeta^3\) & \(-1+\zeta^2\) & \(\zeta^3\) & \(-1+\zeta^2\) & \(-1+\zeta^2\) & \(\zeta^3\) \\
(1,0,0,0,1,1) & 3 & 6 & \(-1+\zeta^2\) & \(\zeta^3\) & \(\zeta^3\) & \(\zeta^3\) & \(-1+\zeta^2\) & \(-1+\zeta^2\) \\
(1,0,1,0,0,1) & 3 & 6 & \(1-\zeta^2\) & \(-\zeta^3\) & \(1-\zeta^2\) & \(-\zeta^3\) & \(-\zeta^3\) & \(1-\zeta^2\) \\
(1,1,0,0,1,0) & 3 & 6 & \(1-\zeta^2\) & \(1-\zeta^2\) & \(-\zeta^3\) & \(-\zeta^3\) & \(1-\zeta^2\) & \(-\zeta^3\) \\
(1,1,1,0,0,0) & 3 & 6 & \(-1+\zeta^2\) & \(-1+\zeta^2\) & \(-1+\zeta^2\) & \(\zeta^3\) & \(\zeta^3\) & \(\zeta^3\) \\
\bottomrule
\end{tabular}
\caption{Non-uniform unit-modulus fixed point for \(d=6\). All nonzero components are \(\zeta^k\), \(\zeta=e^{i\pi/6}\). Vertices with weight \(0\) are omitted.}
\label{tab:DFL_d6}
\end{table}

Direct substitution verifies that this vector satisfies the fixed-point equation exactly (symbolic computation in \(\mathbb Q(\zeta)\)). All nonzero components have modulus \(1\), so the position weight equals the number of non-zero entries. There are \(20\) vertices with weight \(2\) and \(4\) vertices with weight \(6\), all other vertices have weight \(0\). Hence the position distribution is non-uniform (e.g.\ some vertices have probability \(0\), others positive), proving DFL.
\end{proof}

\begin{remark}
In both \(d=4\) and \(d=6\), the constructed fixed points lie in the standard representation isotypic component of the \(S_d\)-action on the position-coin space. Numerical projection shows that the entire fixed-point space is contained in this component (dimension \(6\) for \(d=4\), \(20\) for \(d=6\)). This observation suggests that non-uniform fixed points are generally associated with the standard representation.
\end{remark}

\begin{conjecture}[Disorder-free localization in all even dimensions]
\label{conj:DFL_general}
For every even \(d\ge 4\), the spin-coupled Grover walk on the \(d\)-dimensional hypercube with coupling \(\phi_\sigma=|\sigma|\pi/d\) and all \(+1\) spin configuration exhibits disorder-free localization. Equivalently, the reduced evolution operator \(\tilde W_{\mathbf s}\) possesses a fixed point whose position distribution is non-uniform.
\end{conjecture}

Propositions \ref{prop:DFL_d4} and \ref{prop:DFL_d6} establish the conjecture for \(d=4\) and \(d=6\). For \(d=8\), numerical calculations show a fixed-point space of dimension \(70\), but an explicit unit-modulus sparse fixed point has not yet been found; the standard representation isotypic component remains the natural candidate. No non-uniform fixed points have been found for odd \(d\) or for \(d=2\), supporting the belief that DFL occurs if and only if \(d\ge4\) is even.

\section{Conclusion Remark}

We have studied a spin-coupled quantum walk on the hypercube, where the position-dependent spin rotation provides a mechanism that can produce DFL without external randomness. By decomposing the total evolution into fixed spin-configuration sectors, we reduced the long-time position probability distribution to the spectral subspaces of a family of unitary operators \(\tilde W_{\mathbf s}\). In addition, we obtained an exact path-integral representation for the finite-time position probability in each spin sector, expressing it as a sum over pairs of paths with spin-dependent phase factors. We then derived a general criterion for disorder-free localization within our model: DFL occurs if and only if there exists a spin configuration \(\mathbf s\) and an eigenvalue \(\lambda\) of \(\tilde W_{\mathbf s}\) such that the corresponding eigenspace contains a vector with a non-uniform position marginal. We specialized this criterion to the Grover walk with distance-dependent coupling \(\phi_\sigma=|\sigma|\pi/(n+1)\). For dimension \(d=4\), we explicitly solved the fixed-point equation in the all \(+1\) spin sector and obtained a non-uniform fixed point, thereby giving a rigorous proof that disorder-free localization occurs in this specific model.

Several questions remain open. The explicit construction for \(d=4\), together with numerical computations for \(d=6\) and \(d=8\), strongly suggests that DFL occurs for all even dimensions \(d\ge 4\) in the present Grover walk with distance-dependent coupling and the all \(+1\) spin configuration. However, a complete proof for general even dimensions is still lacking. For odd dimensions, the absence of DFL has not been established in full generality. Our fixed-point analysis only excludes non-uniform vectors in the eigenvalue-one subspace \(\mathcal F_{\mathbf s}\); however, the general DFL criterion also allows contributions from other eigenspaces. A complete proof of no DFL in odd dimensions would require excluding non-uniform position marginals in every eigenspace of \(\tilde W_{\mathbf s}\).

In summary, this work clarifies a boundary of disorder-free localization: while translational invariance alone may preclude DFL in simple settings, the presence of a structured spin environment can generate an effective position-dependent phase, thereby breaking translational symmetry within each spin sector and inducing localization even on a regular graph, as demonstrated by the explicit \(d=4\) construction.

\section*{Author declarations}

\subsection*{Conflict of interest}

The author has no conflicts to disclose.

\subsection*{Data Availability}

Data sharing is not applicable to this article as no new data were created or analyzed in this study.

\end{document}